\newcommand{\arXiv}[2]{arXiv:\href{http://arxiv.org/abs/#1}{#1 #2}}
\renewcommand*{\p@section}{\,}
\renewcommand*{\p@subsection}{\S\,}
\renewcommand*{\p@subsubsection}{\S\,}
\newtheorem{thm}{Theorem}[section]
\newtheorem{lem}[thm]{Lemma}
\newtheorem{prop}[thm]{Proposition}
\newtheorem{exmp}[thm]{Example}
\newtheorem{rem}[thm]{Remark}
\numberwithin{equation}{section}
\newcommand{\CC}{\ensuremath{\mathbb{C}}}
\newcommand{\N}{\ensuremath{\mathbb{N}}}
\newcommand{\Z}{\ensuremath{\mathbb{Z}}}
\newcommand{\ii}{\ensuremath{\mathfrak{i}}}
\newcommand{\rmm}{\ensuremath{\mathrm{m}}}
\newcommand{\tr}{\operatorname{tr}}
\newcommand{\diag}{\operatorname{diag}}
\newcommand{\Hom}{\operatorname{Hom}}
\newcommand{\End}{\operatorname{End}}
\newcommand{\Mat}{\operatorname{Mat}}
\newcommand{\Id}{\operatorname{Id}}
\newcommand{\Gl}{\operatorname{GL}}
\newcommand{\Rep}{\operatorname{Rep}}
\newcommand{\gl}{\ensuremath{\mathfrak{gl}}}
\newcommand{\VV}{\ensuremath{\mathcal{V}}}
\newcommand\dgal[1]{  \left\{\!\!\left\{#1\right\}\!\!\right\} }
\newcommand\br[1]{\{ #1 \}} 
\def\dfat{\mathbf{d}}
\def\nfat{\mathbf{n}}
\def\ntil{\widetilde{\mathbf{n}}}
\def\lambt{\widetilde {\lambda}}
\def\mut{\widetilde{\mu}}
\def\Cspin{\mathcal{C}_{\nfat,\dfat,\lambt}}
\def\CspinOp{\mathcal{C}_{\nfat,\dfat^{\circ},\lambt^{\circ}}}
\def\Cnfat{\mathcal{C}_{\nfat}}
\def\reg{\mathrm{reg}}  
\def\Cnreg{\CC^n_{\reg}}
\begin{document}

\title{Superintegrability of Calogero-Moser systems associated with the cyclic quiver}   

\author{Maxime Fairon}
 \address[Maxime Fairon]{School of Mathematics and Statistics, University of Glasgow, University Place, Glasgow G12 8QQ, UK}
 \email{Maxime.Fairon@glasgow.ac.uk}

\author{Tam\'{a}s G\"{o}rbe}
 \address[Tam\'{a}s G\"{o}rbe]{Bernoulli Institute for Mathematics, Computer Science and Artificial Intelligence, University of Groningen, P.O. Box 407, 9700 AK Groningen, The Netherlands}
 \email{T.Gorbe@rug.nl}

\begin{abstract}
We study complex integrable systems on quiver varieties associated with the cyclic quiver, and prove their superintegrability by explicitly constructing first integrals. We interpret them as rational Calogero-Moser systems endowed with internal degrees of freedom called spins. They encompass the usual systems in type $A_{n-1}$ and $B_n$, as well as generalisations introduced by Chalykh and Silantyev in connection with the multicomponent KP hierarchy. We also prove that superintegrability is preserved when a harmonic oscillator potential is added.
\end{abstract}

\maketitle

 \setcounter{tocdepth}{2}


\section{Introduction}  \label{S:Intro} 

The integrable $n$-particle systems of Toda \cite{To67}, Calogero-Moser \cite{Cal,Mo75}, and Ruijsenaars-Schneider \cite{RS86} have a remarkable tendency to maintain many of their interesting properties when being extended in various ways\footnote{For a brief overview of these integrable systems, we refer to the introductions of our PhD Theses \cite{F19,G17}.}. These properties include superintegrability (if present) and their connections with one another as well as with other objects, may those be soliton equations, orthogonal polynomials or models in statistical physics. The extensions we have in mind include giving the particles internal degrees of freedom (spin models), replacing the underlying type $A$ root system (boundary potentials) or defining the systems on exotic spaces (e.g. quiver varieties). This paper reinforces the above-mentioned phenomenon by proving the superintegrability of (spin) Calogero-Moser type systems attached to cyclic quivers.

Before delving into the particulars of the systems we are to study, let us define what we mean by superintegrability. For our purposes, a superintegrable Hamiltonian system with $N$ degrees of freedom, that is a $2N$-dimensional symplectic manifold $(M,\omega)$ with a smooth function $H\in C^\infty(M)$ of special importance, has $2N-1$ globally defined, independent constants of motion. Such systems are usually referred to as \emph{maximally superintegrable} in the literature \cite{Win04}. We note that maximal superintegrability is a special form of non-commutative (or degenerate) integrability \cite{Ne72,MF78}. The study of superintegrable systems has a long history with such notable examples as the Kepler problem or the $n$-dimensional isotropic harmonic oscillator \cite{Pe90}, but despite its maturity, the field continues to furnish new developments, see e.g. \cite{BMM20,ELW20,FH,Fo20,Ts20}.

The motivation for this work comes from Chalykh and Silantyev's paper \cite{CS} which generalised the KP hierarchy and (spin) Calogero-Moser type systems to cyclic quivers. A natural question to ask is:
\begin{quote}
\emph{Are these new quiver generalisations of (spin) Calogero-Moser systems superintegrable?}
\end{quote}
Our main result is an \emph{affirmative} answer to this question via an explicit construction.

To help place this work into context, let us give a quick (incomplete) review of previous results on the superintegrability of (spin) CM systems. In 1975/76 Adler \cite{A77} showed the superintegrability of the rational Calogero-Moser Hamiltonian with a harmonic potential added (this variant is also known as the Calogero model). In 1983 Wojciechowski \cite{Woj} proved superintegrability of all Hamiltonians of the rational Calogero-Moser system. In 1988 Ruijsenaars \cite{Ru88} published his scattering theory of rational and hyperbolic CM and RS systems (which implies superintegrability). In 1999 Caseiro-Fran\c{c}oise-Sasaki \cite{CasFS} proved superintegrability of rational CM attached to any finite Coxeter group. In 2003 Reshetikhin \cite{R03} established the degenerate integrability of spin CM systems corresponding to co-adjoint orbits of simple Lie algebras \cite{LX02}. Let us also mention the papers \cite{AF09,FG14} where explicitly formulated constants of motion for the rational RS system were found.

To give a sense of the type of integrable systems we consider, they include (as a special case) the rational $B_n$ spin Calogero-Moser model with an external harmonic oscillator potential whose Hamiltonian reads
\begin{equation} \label{Eq:HamIntro}
H=\frac{1}{2}\sum_{i=1}^np_i^2+\sum_{\substack{i,j=1\\(i<j)}}^n f_{ij}f_{ji}\left[\frac{1}{(x_i-x_j)^2}+\frac{1}{(x_i+x_j)^2} \right]+\frac{\gamma_1}{2}\sum_{i=1}^n\frac{1}{x_i^2}+\frac{\omega^2}{2}\sum_{i=1}^n x_i^2
\end{equation}
with particle momenta and positions $(p_i,x_i)$,  spin variables $f_{ij}$ and arbitrary coupling constants $\gamma_1,\omega$.

\noindent
Note that the variables $f_{ij}$ can be seen as ``collective" spins. For a fixed $d>1$, they depend on $2nd$ (constrained) parameters that are interpreted as $n$ sets of $2d$ spin variables, where one such set is attached to each particle.  

The key idea (inspired by the works \cite{AF09,AFG,Cas02}) that lets us construct the constants of motion required for superintegrability can be summarised as follows. Let $M$ be an arbitrary Poisson manifold (either real or complex) with a Poisson bracket $\br{-,-}$. Then we have the following
\begin{thm} \label{Thm:sup}
Fix a function $H$ on $M$, and assume that there exists a family of functions $(g_j)_{j \in \N}$ such that for all $j\in \N$
\begin{equation*}
 \br{H,g_j}\neq 0\,, \quad  \br{H,\br{H,g_j}}=\alpha_j g_j\,, 
\end{equation*}
for some constants $\alpha_j$.   
\begin{enumerate}
 \item[a)] For any $j,k \in \N$ with  $\alpha_j=\alpha_k$, the function 
\begin{equation}
  C_{j,k}^H:=g_j\, \br{H,g_k} - g_k\, \br{H,g_j} \,, 
\end{equation}
is a first integral of $H$. 
\item[b)] For any $j \in \N$, the function 
\begin{equation}
  \tilde C_{j}^H:=\, \br{H,g_j}^2 - \alpha_j g_j^2 \,, 
\end{equation}
is a first integral of $H$.
\end{enumerate}
\end{thm}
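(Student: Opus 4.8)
The plan is to establish both parts by a direct computation that relies only on the bilinearity and the Leibniz (derivation) property of the Poisson bracket $\br{-,-}$, together with the standing hypothesis $\br{H,\br{H,g_j}}=\alpha_j g_j$. Since no further feature of $M$ enters, the argument is insensitive to whether $M$ is real or complex. The hypothesis $\br{H,g_j}\neq 0$ plays no role in the derivation of the conservation law itself; it is recorded only to guarantee that the functions produced are non-trivial, which matters when one later counts independent integrals.

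For part a), I would expand $\br{H,C_{j,k}^H}$ by writing $\br{H,g_j\,\br{H,g_k}}=\br{H,g_j}\br{H,g_k}+g_j\,\br{H,\br{H,g_k}}$ and the analogous identity with $j$ and $k$ interchanged. Subtracting, the symmetric products $\br{H,g_j}\br{H,g_k}$ cancel, leaving $g_j\,\br{H,\br{H,g_k}}-g_k\,\br{H,\br{H,g_j}}=\alpha_k g_j g_k-\alpha_j g_k g_j=(\alpha_k-\alpha_j)g_j g_k$, which vanishes exactly under the assumption $\alpha_j=\alpha_k$.

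For part b), the same mechanism applies once one observes that $\br{H,f^2}=2f\,\br{H,f}$ for any function $f$. Applying this with $f=\br{H,g_j}$ to the first term of $\tilde C_j^H$ and with $f=g_j$ to the second yields $\br{H,\tilde C_j^H}=2\,\br{H,g_j}\,\br{H,\br{H,g_j}}-2\alpha_j g_j\,\br{H,g_j}=2\alpha_j g_j\,\br{H,g_j}-2\alpha_j g_j\,\br{H,g_j}=0$, using that the constant $\alpha_j$ factors out of the bracket.

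The calculation is entirely routine, so I do not anticipate a genuine obstacle inside the proof; the only point requiring (minor) care is the correct application of the Leibniz rule to the quadratic expressions. The substantive work, deferred to later sections, is of a different nature: one must exhibit, for the Calogero-Moser type Hamiltonians on the cyclic-quiver varieties, an explicit family $(g_j)_{j\in\N}$ verifying the two bracket identities, and then show that sufficiently many of the resulting integrals $C_{j,k}^H$ and $\tilde C_j^H$ are functionally independent to attain the $2N-1$ required for maximal superintegrability.
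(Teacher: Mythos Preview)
Your proposal is correct and matches the paper's approach: the paper itself does not write out the computation but simply remarks that ``the proofs of these results involve a straightforward use of the Leibniz rule and the assumptions,'' which is exactly what you do. Your additional observation that the argument uses only the derivation property (and hence extends beyond the Poisson setting) is also noted in the paper.
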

\begin{thm} \label{Thm:supB}
Fix a function $H$ on $M$, and assume that there exist two families of functions $(g_j)_{j \in \N}, (\tilde g_j)_{j \in \N}$ such that 
for all $j\in \N$ 
\begin{equation*}
 \br{H,g_j}= \alpha_j g_j\,, \quad \br{H,\tilde g_j}=\tilde \alpha_j \tilde g_j\,,
\end{equation*}
for some constants $\alpha_j, \tilde \alpha_j$. Then, for any $j,k \in \N$ with $\alpha_j=-\tilde \alpha_k$, the function 
\begin{equation}
  D_{j,k}^H:=g_j \tilde g_k \,, 
\end{equation}
is a first integral of $H$.
\end{thm}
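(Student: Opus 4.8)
The plan is to compute $\br{H,D_{j,k}^H}$ directly, using only that the Poisson bracket is a derivation in each argument (the Leibniz rule), a property available on any Poisson manifold, real or complex. Writing $D_{j,k}^H=g_j\tilde g_k$, the Leibniz rule in the second slot gives
\begin{equation*}
\br{H,g_j\tilde g_k}=\br{H,g_j}\,\tilde g_k+g_j\,\br{H,\tilde g_k}.
\end{equation*}
Next I would substitute the two hypotheses $\br{H,g_j}=\alpha_j g_j$ and $\br{H,\tilde g_k}=\tilde\alpha_k\tilde g_k$, which turns the right-hand side into $(\alpha_j+\tilde\alpha_k)\,g_j\tilde g_k$. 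Finally, the assumption $\alpha_j=-\tilde\alpha_k$ makes the scalar prefactor vanish, so $\br{H,D_{j,k}^H}=0$, i.e. $D_{j,k}^H$ is a first integral of $H$, as claimed.

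There is essentially no obstacle in this statement: the entire content is the single application of the derivation property, and one does not even need the second-order relations $\br{H,\br{H,g_j}}=\alpha_jg_j$ used in Theorem~\ref{Thm:sup}, since here the $g_j$ and $\tilde g_j$ are assumed to be genuine eigenfunctions of the operator $\br{H,-}$. (One may also note in passing that the same computation shows $g_j\tilde g_k$ is always an eigenfunction of $\br{H,-}$ with eigenvalue $\alpha_j+\tilde\alpha_k$, so the hypothesis $\alpha_j=-\tilde\alpha_k$ is exactly what is required.) The genuinely difficult part, which lies outside this abstract lemma, is to produce such families $(g_j)_{j\in\N}$ and $(\tilde g_j)_{j\in\N}$ explicitly for the Calogero-Moser type systems on cyclic quiver varieties under consideration.
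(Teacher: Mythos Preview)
Your proof is correct and matches the paper's approach exactly: the paper states that the proof is ``a straightforward use of the Leibniz rule and the assumptions,'' which is precisely the computation you carry out. Your additional remark that the argument works for any derivation is also noted in the paper.
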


The proofs of these results involve a straightforward use of the Leibniz rule and the assumptions. In fact, Theorems \ref{Thm:sup} and \ref{Thm:supB} hold more generally for derivations, so they can be used in the quantum case, too.

\begin{rem}
In this paper, we adopt the convention $\N=\{0,1,\ldots\}$ and work in the complex setting, that is over the field of complex numbers $\mathbb{C}$.
\end{rem}

The structure of the paper is as follows. In Section \ref{S:CM}, we describe the spinless Calogero-Moser spaces and prove superintegrability for spinless rational Calogero-Moser systems attached to cyclic quivers. Section \ref{S:spinCM} contains the spin generalisation of the results of Section \ref{S:CM}. In Section \ref{SS:Harm}, we prove superintegrability for the (spin) rational Calogero Hamiltonian (i.e. CM particles in a harmonic well) associated with classical Lie algebras. Section \ref{S:Dbr} explains the basics of the main computational tool of the paper, double brackets, and it contains the detailed derivations of formulas used in previous sections. Finally, in Section \ref{S:end}, we conclude the paper with an outlook on possible generalisations and future plans.

\subsection*{Acknowledgements.}
We thank L. Feh\'{e}r for bringing relevant references to our attention.
The work of M.F. was partly supported by a Rankin-Sneddon Research Fellowship of the University of Glasgow. 

\bigskip\noindent
\parbox{.125\textwidth}{\begin{tikzpicture}[scale=.035]
\fill[fill={rgb,255:red,0;green,51;blue,153}] (-27,-18) rectangle (27,18);  
\pgfmathsetmacro\inr{tan(36)/cos(18)}
\foreach \i in {0,1,...,11} {
\begin{scope}[shift={(30*\i:12)}]
\fill[fill={rgb,255:red,255;green,204;blue,0}] (90:2)
\foreach \x in {0,1,...,4} { -- (90+72*\x:2) -- (126+72*\x:\inr) };
\end{scope}}
\end{tikzpicture}} \parbox{.85\textwidth}{This project has received funding from the European Union's Horizon 2020 research and innovation programme under the Marie Sk{\l}odowska-Curie grant agreement No 795471.}

\vspace{0.8cm}


\section{Calogero-Moser system for the cyclic quiver}  \label{S:CM} 

In this section, we consider Calogero-Moser spaces of complex dimension $2n$ associated with cyclic quivers on $m\geq1$ vertices extended by one arrow. Their connection to integrable systems in the simplest case ($m=1$) goes back to Wilson \cite{W}, and has been extended in \cite{CS,GG18}. 

\subsection{Description of the space}  \label{SS:CMdesc}

We omit a detailed introduction to the spaces at hand since they are special cases of the spaces introduced in Section \ref{S:spinCM}. In those notations, we consider $\dfat=(1,0,\ldots,0)$, and put $V:=V_{0,1}$ and $W:=W_{0,1}$.

We fix integers $n,m \geq 1$, and for $I=\Z/m\Z$ we choose a generic  $\lambt=(\lambda_s)\in \CC^I$, see \ref{SS:spinCMdesc} for the precise genericity conditions.  We let $|\lambt|=\sum_{s\in I} \lambda_s$. The Calogero-Moser space $\Cnfat$ is obtained by Hamiltonian reduction from the set of matrices 
\begin{equation*}
 X_s,Y_s\in \Mat_{n \times n}(\CC),\,\,s\in I=\Z/m\Z, \quad V\in \Mat_{1 \times n}(\CC),\, \, W \in \Mat_{n \times 1}(\CC)\,,
\end{equation*}
by requiring the $n$ matrix conditions 
\begin{equation} \label{Eq:Mom}
 X_s Y_s - Y_{s-1} X_{s-1} - \delta_{s,0}\, W V = \lambda_s \Id_{n_s}\,,
\end{equation}
before considering orbits of the action of $\Gl(\nfat)=\prod_{s\in I}  \Gl_{n}(\CC)$ given by  
\begin{equation} \label{Eq:Act}
 g\cdot (X_s,Y_s,W,V)=(g_s X_s g_{s+1}^{-1},g_{s+1}Y_s g_s^{-1},g_0 W,V  g_0^{-1})\,, \quad g=(g_s)\in \Gl(\nfat)\,.
\end{equation}

We consider a first restriction to the subset $\Cnfat^\circ \subset \Cnfat$ where the product $X_0\ldots X_{m-1}\in \Mat_{n \times n}(\CC)$ is diagonalisable, and its diagonal form is given by $\diag(x_1^m,\ldots,x_n^m)$ where  
$(x_1,\ldots,x_n)\in \Cnreg$ for 
\begin{equation} \label{Eq:Cnreg}
 \Cnreg:=\{ (x_1,\ldots,x_n)\in (\CC^\times)^n \mid  x_i^m \neq x_j^m,\, i\neq j\}\,.
\end{equation}
We then choose a representative where $X_s=D$ for each $s\in I$, with $D=\diag(x_1,\ldots,x_n)$. Finally, we look at the subset $\Cnfat' \subset \Cnfat^\circ$ where for such representatives, the vector $W$ has non-zero entries. In $\Cnfat'$, it is an easy exercise to see that we can parametrise any point $(X_s,Y_s,V,W)$ using 
\begin{equation}
 \begin{aligned} \label{Eq:Cn1}
  X_s=\diag(x_1,\ldots,x_n),\,\, s\in I\,, \quad W=(1,\ldots,1)^T,\quad V=-|\lambt|(1,\ldots,1)\,, \\
  Y_s=(Y_s)_{ij}, \quad \text{for } (Y_s)_{ij}=\delta_{ij}p_j+\delta_{ij} \frac{1}{x_i}(\lambda_1+\ldots+\lambda_s) - \delta_{(i\neq j)}\,|\lambt|\frac{x_i^{m-s-1}x_j^s}{x_i^m-x_j^m}\,,
 \end{aligned}
\end{equation}
where $(x_1,\ldots,x_n)\in \Cnreg$ and $(p_1,\ldots,p_n)\in \CC^n$. We can also see that this is unique up to $\Z_m\wr S_n$ action, which acts by permutation of the entries using $S_n$, and by $(x_1,\ldots,x_n)\mapsto(\mu^r x_1,\ldots,\mu^r x_n)$ using $\Z_m$, where $\mu$ is a primitive $m$-th root of unity. The reduced Poisson bracket is canonical and given by 
\begin{equation}
 \br{x_i,x_j}=0\,, \quad \br{x_i,p_j}=\frac1m \delta_{ij}\,, \quad \br{p_i,p_j}=0\,.
\end{equation}

\subsection{Superintegrability} \label{SS:CMint} 
We form the matrix $X\in \Mat_{nm \times nm}(\CC)$ as an $m\times m$ matrix with blocks of size $n\times n$, where the only nonzero blocks are given by placing $X_s$ in position $(s,s+1)$. In the same way, we form $Y\in \Mat_{nm \times nm}(\CC)$ with only nonzero blocks being $Y_s$ placed in position $(s+1,s)$.  (With the notations of \ref{SS:spinCMdesc}, $X=\sum_s X_s$ and $Y=\sum_s Y_s$.) In particular, $X^k$ and $Y^k$ are block diagonal if and only if $k$ is divisible by $m$. 
The functions $\tr Y^{mi}$, $i \in \N$, are trivially Poisson commuting on $\Cnfat$, see Lemma \ref{Lem:dbrYY}. In this section, we are interested in proving that each such function is superintegrable based on the following example.

\begin{exmp}
 In the case $m=1$, we have for $h_i=\frac1i \tr Y^i$, that the functions $(h_i)_{i=1}^n$ define an integrable system such that $h_2$ is the Hamiltonian for the CM system. We note that for any $i \in \N^\times$
 \begin{equation*}
  \br{h_i,\tr X Y^k}=-\tr Y^{k+i-1}\,,
 \end{equation*}
is a first integral of the integrable system. Thus 
\begin{equation} \label{Eq:Woj}
 C_{j,k}^i=\tr(XY^j) \tr(Y^{k+i-1}) - \tr (XY^k) \tr(Y^{j+i-1}) \,,
\end{equation}
is also a first integral of $h_i$ by Theorem \ref{Thm:sup}. This is Wojciechowski's integral  $K^{(i)}_{j+1,k+1}$ \cite{Woj}. 
\end{exmp}

We now fix $m \geq 1$, and set $h^{m,i}=\frac{1}{mi}\tr Y^{mi}$. 
\begin{lem} \label{Lem:WojInt}
 Fix $i \in \N^\times$. For any $j,k\in \N$, the function 
\begin{equation} \label{Eq:Cm}
 C_{j,k}^{m,i}=\tr(XY^{jm+1}) \tr(Y^{(k+i)m}) - \tr (XY^{km+1}) \tr(Y^{(j+i)m}) \,,
\end{equation}
is a first integral of $h^{m,i}$. 
\end{lem}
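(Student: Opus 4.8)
The plan is to deduce the statement from Theorem~\ref{Thm:sup}(a), applied with $H=h^{m,i}$ and with the family $g_j:=\tr(XY^{jm+1})$, $j\in\N$. The two things to check are that $\br{h^{m,i},g_j}\neq 0$ and that $\br{h^{m,i},\br{h^{m,i},g_j}}=\alpha_j g_j$ for suitable constants $\alpha_j$; once this is established, the first integral produced by part~(a) will, up to an overall nonzero scalar, coincide with $C_{j,k}^{m,i}$.

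First I would compute $\br{h^{m,i},g_j}$. As in the example for $m=1$, this is a routine (if somewhat lengthy) calculation with the double brackets of Section~\ref{S:Dbr}: using the cyclic-quiver analogue of the identity $\br{h_i,\tr XY^k}=-\tr Y^{k+i-1}$ together with the block structure of $X$ and $Y$ (so that $Y^{jm}$ is block diagonal while $X$ acts as the ``shift'' in the opposite direction), one obtains
\begin{equation*}
 \br{h^{m,i},g_j}=\tfrac{1}{mi}\br{\tr Y^{mi},\tr(XY^{jm+1})}=-\tr Y^{(j+i)m}\,.
\end{equation*}
Since $\tr Y^{(j+i)m}$ is a nonzero function on $\Cnfat$ (it is a nonzero polynomial in the momenta $p_\ell$), the first hypothesis holds. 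I expect this bracket computation — keeping track of the block indices $s\in\Z/m\Z$ — to be the main obstacle; it is the kind of identity carried out in detail in Section~\ref{S:Dbr}.

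Applying $\br{h^{m,i},-}$ once more then gives
\begin{equation*}
 \br{h^{m,i},\br{h^{m,i},g_j}}=-\br{h^{m,i},\tr Y^{(j+i)m}}=0\,,
\end{equation*}
because by Lemma~\ref{Lem:dbrYY} all the functions $\tr Y^{m\ell}$ Poisson commute; in particular $\tr Y^{mi}$ commutes with $\tr Y^{(j+i)m}$. Thus the hypotheses of Theorem~\ref{Thm:sup}(a) hold with $\alpha_j=0$ for every $j\in\N$, so that trivially $\alpha_j=\alpha_k$ for all $j,k$.

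Finally, the first integral supplied by Theorem~\ref{Thm:sup}(a) is
\begin{equation*}
 C_{j,k}^{h^{m,i}}=g_j\br{h^{m,i},g_k}-g_k\br{h^{m,i},g_j}
 =-\tr(XY^{jm+1})\tr(Y^{(k+i)m})+\tr(XY^{km+1})\tr(Y^{(j+i)m})=-C_{j,k}^{m,i}\,.
\end{equation*}
Since $-C_{j,k}^{m,i}$ is a first integral of $h^{m,i}$, so is $C_{j,k}^{m,i}$, which is what we wanted.
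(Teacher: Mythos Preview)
Your proof is correct and follows essentially the same approach as the paper: compute $\br{h^{m,i},\tr(XY^{jm+1})}=-\tr Y^{(i+j)m}$ via the double-bracket identities of Lemma~\ref{Lem:dbrYY}, observe that this is itself a first integral (equivalently, that $\alpha_j=0$), and then apply Theorem~\ref{Thm:sup}(a). The paper's proof is more terse---it simply cites Lemma~\ref{Lem:dbrYY} and Theorem~\ref{Thm:sup}(a) without spelling out the verification of the hypotheses or the sign---but the argument is the same.
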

\begin{proof}
It is proved in  Lemma \ref{Lem:dbrYY} that $\br{h^{m,i},\tr(XY^{jm+1})}= -\tr Y^{m(i+j)}$, which is a first integral. 
So the result follows from Theorem \ref{Thm:sup} a). 
\end{proof}

\begin{prop} \label{Pr:supCM}
 Fix $i \in \N^\times$. Then the function $h^{m,i}$ is maximally superintegrable.
\end{prop}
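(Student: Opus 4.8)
The space $\Cnfat$ has complex dimension $2n$, so we must exhibit $2n-1$ globally defined, functionally independent first integrals of $h^{m,i}$. Write $\psi_l:=\tr Y^{ml}=ml\,h^{m,l}$ and $\phi_k:=\tr(XY^{km+1})$. The $\psi_l$ Poisson commute (Lemma \ref{Lem:dbrYY}), so $h^{m,1},\dots,h^{m,n}$ are $n$ first integrals of $h^{m,i}$; and by Lemma \ref{Lem:WojInt} (taking the second index to be $0$) so are the functions
\[
C_k:=C^{m,i}_{k,0}=\phi_k\,\psi_i-\phi_0\,\psi_{i+k}\,,\qquad k\in\N\,.
\]
These are all $\Gl(\nfat)$-invariant polynomials in the matrix entries, hence globally defined on $\Cnfat$. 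The plan is: for $n=1$ the Hamiltonian $h^{m,i}$ itself is the only required integral, while for $n\ge2$ we take the $2n-1$ functions $h^{m,1},\dots,h^{m,n},C_1,\dots,C_{n-1}$, and the whole content of the proof is to show that they are functionally independent.

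Independence is tested on the dense open subset $\Cnfat'$ with the canonical coordinates $(x_j,p_j)$ of \eqref{Eq:Cn1}; since the independence locus is Zariski open it suffices to find one point where the differentials are linearly independent. Because $Z_0:=Y_{m-1}\cdots Y_0$ is an $n\times n$ matrix one has $\psi_l=m\tr Z_0^{\,l}$, and $\psi_{i+k}$ is a polynomial in $\psi_1,\dots,\psi_n$ (power sums of order $>n$ are determined by the first $n$). Hence, wherever $\psi_i\neq0$, the map $(\psi_1,\dots,\psi_n,\phi_0,\dots,\phi_{n-1})\mapsto(\psi_1,\dots,\psi_n,C_1,\dots,C_{n-1},\phi_0)$ is a local diffeomorphism, as $\phi_k=(C_k+\phi_0\,\psi_{i+k})/\psi_i$. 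So it is enough to prove that the $2n$ functions $\psi_1,\dots,\psi_n,\phi_0,\dots,\phi_{n-1}$ are functionally independent at a generic point of $\Cnfat'$: this transfers to $\psi_1,\dots,\psi_n,C_1,\dots,C_{n-1},\phi_0$, and a fortiori to the sub-collection $\psi_1,\dots,\psi_n,C_1,\dots,C_{n-1}$, which (recalling $h^{m,l}=\psi_l/(ml)$) is what we want.

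To prove the latter I would compute the $2n\times2n$ Jacobian of $(\psi_1,\dots,\psi_n,\phi_0,\dots,\phi_{n-1})$ with respect to $(x_j,p_j)$ in the asymptotic regime where all $|x_j|\to\infty$ (with $p$ generic): the off-diagonal entries of every $Y_s$ in \eqref{Eq:Cn1} are then $O(1/x_j)$, so to leading order $Y_s\to\diag(p_1,\dots,p_n)$ for all $s$, whence $\psi_l\to m\sum_j p_j^{ml}$ and $\phi_k\to m\sum_j x_j p_j^{km+1}$. In this limit $\partial\psi_l/\partial x_a\to0$, so the Jacobian becomes block anti-triangular; the $C$-block ($\partial\phi_k/\partial p_a$, the one growing with $x$) drops out of the determinant, which therefore tends to $\pm\det[m^2l\,p_a^{ml-1}]\cdot\det[m\,p_a^{km+1}]$. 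Both factors are generalised Vandermonde determinants in the variables $p_a^m$, proportional to $(\prod_a p_a^{m-1})\prod_{a<b}(p_b^m-p_a^m)$ and $(\prod_a p_a)\prod_{a<b}(p_b^m-p_a^m)$, hence nonzero for generic $p$. So the Jacobian determinant is nonzero at points of $\Cnfat'$ with $|x_j|$ large, which furnishes the desired point.

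The main obstacle is this last functional-independence step. The useful observation (which I would stress in the write-up) is that Cayley--Hamilton together with the elementary triangular change of variables above reduces the problem to the joint independence of $\psi_1,\dots,\psi_n$ and $\phi_0,\dots,\phi_{n-1}$; after that one only needs a degeneration — here, sending the particles to infinity — in which the Jacobian collapses into a product of two Vandermonde determinants. Verifying that this limit is attained (i.e. that the lower-order corrections really are subleading, so that the off-diagonal blocks dominate) is the one genuinely computational point.
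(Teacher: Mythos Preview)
Your argument is correct. The reduction via Cayley--Hamilton to the joint independence of the $\psi_l$ and the $\phi_k$, followed by the asymptotic degeneration $|x_j|\to\infty$ that collapses the Jacobian into a product of two Vandermonde determinants in the $p_a^m$, goes through; the scaling count you allude to (top-left block $O(1/x^2)$, bottom-right block $O(x)$, off-diagonal blocks $O(1)$) indeed shows that only the anti-diagonal contribution survives in the limit.

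The paper proceeds differently. It uses the $C_{k,1}^{m,i}$ rather than your $C_{k,0}^{m,i}$ (immaterial), but more to the point it refers to the argument at the start of the proof of Proposition~\ref{Pr:supspinCM}, which avoids asymptotics altogether. There one introduces the diffeomorphism $\Psi$ that essentially swaps the roles of $X$ and $Y$ (sending $X_s\mapsto Y_{m-s-1}$, $Y_s\mapsto X_{m-s-1}$) between $\Cnfat$ and a companion space. Pulling back via $\Psi$ one obtains \emph{exact} local expressions $\Psi^\ast(\tr Y^{km})=m\sum_j x_j^{km}$ and $\Psi^\ast(\tr XY^{km+1})=m\sum_j p_j x_j^{km+1}$ on the dense coordinate chart: the first family depends only on positions, the second is linear in the momenta. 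The Jacobian is then block triangular on the nose, and the Vandermonde factors appear without any limiting procedure or control of subleading terms. Your approach trades this auxiliary duality for an asymptotic argument; it is more self-contained (no second space, no $\Psi$), at the cost of the order-of-magnitude bookkeeping you flag at the end. The paper's route, on the other hand, extends seamlessly to the spin case in Section~\ref{S:spinCM}, where the same $\Psi$ handles the additional framing variables.
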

\begin{proof} 
 It suffices to show that $h^{m,1},\ldots,h^{m,n}$ and $C_{2,1}^{m,i},\ldots,C_{n,1}^{m,i}$ are functionally independent. This can be done as in \cite{Woj}, see the beginning of the proof of Proposition \ref{Pr:supspinCM}. 
\end{proof}

\begin{rem} \label{Rem:CMexp}
The fact that these systems are Liouville integrable appears in \cite[Section V]{CS}, and it is mentioned in \cite[\S 4.4]{GR} for $\lambt=(0,\ldots,0)$. 
Superintegrability in the case  $m=1$ corresponds to the original work of Wojciechowski \cite{Woj}. Indeed, the function  
 \begin{equation}
  h^{1,2}=\frac12 \sum_{i=1}^n p_i^2 - \lambda_0^2 \sum_{\substack{i,j=1\\(i<j)}}^n \frac{1}{(x_i-x_j)^2}\,, 
 \end{equation}
is the usual rational CM Hamiltonian of type $A_{n-1}$. 
 The case $m=2$ is equivalent to the $B_n$ case \cite[Example 5.6]{CS}. Introducing $p_i'=p_i+\frac{\lambda_1}{2x_i}$ so that $(x_i,p_i')$ are canonical coordinates, we can write 
  \begin{equation} \label{Eq:CM-Dn}
  \frac12 h^{2,1}=\frac12\tr(Y_0Y_1)=\frac12 \sum_{i=1}^n (p_i')^2 - \frac{|\lambt|^2}{4} \sum_{\substack{i,j=1\\(i<j)}}^n \left[\frac{1}{(x_i-x_j)^2}+\frac{1}{(x_i+x_j)^2} \right]-\frac{\lambda_1^2}{8}\sum_{i=1}^n \frac{1}{x_i^2}\,, 
 \end{equation}
 which is the rational CM Hamiltonian in type $B_n$, or type $D_n$ if $\lambda_1=0$ \cite{OP}. Superintegrability of rational CM systems associated with arbitrary root systems is established in \cite{CasFS}. 
\end{rem}


\section{Spin Calogero-Moser systems for the cyclic quiver}  \label{S:spinCM}

\subsection{Phase space}  \label{SS:spinCMdesc} 

We now define the general Calogero-Moser spaces associated with cyclic quivers. When there are several framing arrows going either to one vertex of the cyclic quiver, or when the number of framing arrows is the same for all the vertices in the cyclic quiver, these spaces and the corresponding integrable systems were first studied\footnote{Our presentation differs from the original considerations in \cite{CS} as follows :  we take a different convention for the direction of the framing arrows, and we look at representations of the path algebra of the quivers that we consider, not the opposite quivers.} in \cite{CS} and \cite{GG18}. In the case $m=1$, the spaces can be traced back to the works \cite{W2,BP,T15}, where it was established that the systems correspond to the spin CM system due to Gibbons and Hermsen \cite{GH}.

Fix an integer $m \geq 1$ and let $I=\Z_m=\Z/m \Z$. 
When we consider $I$ as a set, we identify it with $\{0,\ldots,m-1\}$ by sending an element $s\in I$ to its representative in $\{0,\ldots,m-1\}$. 
Moreover, fix $\dfat=(d_0,\ldots,d_{m-1})\in \N^{I}$ such that $|\dfat|=\sum_{s\in I} d_s\geq 1$. 
Without loss of generality, we simply assume that $d_0 \geq 1$ while $d_s \in \N$ for $s \in I \setminus \{0\}$.

We consider the cyclic quiver on $m$ arrows with framing corresponding to $\dfat$, which is defined in the following way. Let $Q_\dfat$ be the quiver with vertex set $\widetilde{I}=I \cup \{\infty\}$, and whose edge set consists, for all $s\in I$, of $d_{s}+1$ arrows given by $x_s:s \to s+1$ and $v_{s,\alpha}:\infty \to s$ with $\alpha=1,\ldots,d_s$. (There is no arrow $\infty \to s$ when $d_s=0$.) The double $\bar{Q}_\dfat$ of $Q_\dfat$ then consists of the same vertex set $\widetilde I$, and $2m+2|\dfat|$ arrows given by the ones described above together with $y_s=x_s^\ast:s+1 \to s$, $w_{s,\alpha}=v_{s,\alpha}^\ast: s \to \infty$ for all $1\leq \alpha \leq d_s$ and $s \in I$. 

\begin{rem} \label{remCyc}
We adopt the following conventions for the rest of the text. 
The indices $r,s$ range over $I$. 
When we consider a couple $(s,\alpha)$, for example as index of $v_{s,\alpha}$, we assume that $s\in I$ as we have just explained and $\alpha$ ranges over the set $\{1,\ldots,d_s\}$. We omit such couples when $d_s=0$. 
\end{rem}

\subsubsection{Definition of the space}

We fix $\ntil=(\nfat,1)$ with $\nfat=(n_s)\in \N^I$ such that $|\nfat|=\sum_s n_s>0$. 
A point $\rho\in \Rep(\CC \bar{Q}_\dfat, \ntil)$ consists of the vector space $\VV=(\oplus_{s\in I}  \VV_s) \oplus \VV_\infty$ with $\VV_s=\CC^{n_s}$ for each $s\in I$ and $\VV_\infty=\CC$, together with $2m+2|\dfat|$ matrices given by 
\begin{equation} \label{Eq:MatGen}
 \begin{aligned}
 &X_s \in \Hom(\VV_{s+1},\VV_{s})\,, \quad 
Y_s \in \Hom(\VV_{s},\VV_{s+1})\,, \\ 
&V_{s,\alpha}\in \Hom(\VV_s,\VV_\infty)\,, \quad
W_{s,\alpha}\in \Hom(\VV_\infty, \VV_s)\,,
 \end{aligned} 
\end{equation}
which respectively represent the arrows $x_s,y_s,v_{s,\alpha},w_{s,\alpha}$. We identify the point $\rho$ with the tuple of matrices $(X_s,Y_s,V_{s,\alpha},W_{s,\alpha})$ to ease our discussion. 
We directly see that $\Rep(\CC \bar{Q}_\dfat, \ntil)$ is a smooth affine variety of dimension $2\sum_{s\in I} n_s (n_{s+1}+d_s)$. 

We have a $\Gl(\nfat):=\prod_{s\in I} \Gl_{n_s}(\CC)$ action on $\Rep(\CC \bar{Q}_\dfat,\ntil)$ given by 
\begin{equation} \label{Eq:spinAct}
 g\cdot (X_s,Y_s,W_{s,\alpha},V_{s,\alpha})=(g_s X_s g_{s+1}^{-1},g_{s+1}Y_s g_s^{-1},g_s W_{s,\alpha},V_{s,\alpha} g_s^{-1})\,, \quad g=(g_s)\in \Gl(\nfat)\,.
\end{equation}
Following e.g. Van den Bergh \cite{VdB1}, the complex manifold $\Rep(\CC \bar{Q}_\dfat, \ntil)$ admits a Poisson bracket $\br{-,-}$ given by 
\begin{equation} \label{Eq:PB}
 \br{(X_r)_{ij},(Y_s)_{kl}}= \delta_{rs} \delta_{kj}\delta_{il}\,, \quad 
 \br{(V_{r,\alpha})_{j},(W_{s,\beta})_{k}}= \delta_{rs} \delta_{\alpha\beta} \delta_{kj}\,,
\end{equation}
and which is zero on any other pair of entries of the matrices \eqref{Eq:MatGen}. Moreover, it is endowed with a moment map $\mut$ with value in 
$\gl(\nfat):=\prod_{s\in I}  \gl_{n_s}(\CC)$ given by 
\begin{equation} \label{Eq:spinMom}
 \mut=\sum_{s\in I}  \mu_s\,, \quad \mu_s= X_s Y_s - Y_{s-1} X_{s-1} - \sum_{1 \leq \alpha \leq d_s} W_{s,\alpha} V_{s,\alpha}\in \End(\VV_s)\,,
\end{equation}
where we omit the final sum in $\mu_s$ if $d_s=0$.  

Fix $\lambt=(\lambda_s)\in \CC^I$ and denote by $\lambt \cdot \Id \in \gl(\nfat)$ the element with blocks  $\lambda_s \Id_{n_s} \in \gl_{n_s}(\CC)$. 
Then, the slice $\mut^{-1}(\lambt \cdot \Id)$ corresponds to imposing the $m$ equations 
\begin{equation} \label{Eq:spinMom2}
 X_s Y_s - Y_{s-1} X_{s-1} - \sum_{1 \leq \alpha \leq d_s} W_{s,\alpha} V_{s,\alpha} = \lambda_s \Id_{n_s}\,,
\end{equation}
from which it follows by taking traces that $\sum_{s\in I}\sum_{1 \leq \alpha \leq d_s}  V_{s,\alpha} W_{s,\alpha}=-\sum_s \lambda_s n_s =: -\lambt \cdot \nfat$. 
Using Hamiltonian reduction, it follows that the GIT quotient $\Cspin=\mut^{-1}(\lambt \cdot \Id)/\!/\Gl(\nfat)$ is a Poisson variety. 
The space hence obtained is a quiver variety : it is the GIT quotient for the $\Gl(\nfat)$ action \eqref{Eq:spinAct} on the representation space associated with a deformed preprojective algebra of $Q$ with parameter $(\lambt,-\lambt \cdot \nfat)$. 

From now on, we further assume that $\nfat=(n,\ldots,n)$ for some $n \in \N^\times$, and we simply denote $\Cspin$ by $\Cnfat$. Then, 
$\Cnfat$ is a non-empty smooth variety which coincides with the set-theoretic orbit space $\mut^{-1}(\lambt \cdot \Id)/\Gl(\nfat)$ provided that the regularity conditions 
\begin{equation} \label{Eq:spinReg}
\lambda_0+\ldots + \lambda_{m-1}\neq0\,, \quad \text{ and }\quad 
k (\lambda_0+\ldots +\lambda_{m-1}) \neq \lambda_r+\ldots+ \lambda_{s-1},\,\,\, k \in \Z,\, 1\leq r<s\leq m-1\,,
\end{equation}
are satisfied, see \cite[Proposition 3]{BCE} or \cite[Theorem 1.2]{CB01}. Note that $\Cnfat$ has dimension $2n |\dfat|$. 

\subsubsection{Local description} \label{sss:Spinloc}

We consider the open subspace $\Cnfat^\circ\subset \Cnfat$ where the product $X_{0}\ldots X_{m-1}$ is invertible with distinct eigenvalues $x_1^m,\ldots,x_n^m$. We pick $m$-th roots $(x_i)$ of the eigenvalues, and by construction of $\Cnfat^\circ$ these elements take value in $\Cnreg$ \eqref{Eq:Cnreg}. We can use the $\Gl(\nfat)$ action to pick any representative such that 
$X_s=\diag(x_1,\ldots,x_n)$ for each $s\in I$, and there remains an overall action by the normaliser $\mathcal N$ of the diagonal subgroup $(\CC^\times)^n\subset\Gl_n(\CC)$ seen as a subgroup of $\Gl(\nfat)$ through $\mathcal N\ni h \mapsto \prod_{s\in I}h \in \Gl(\nfat)$. 

We then define the open subspace $\Cnfat' \subset \Cnfat^\circ$ where for one (hence any) such representative, the vector $\sum_{1\leq \alpha \leq d_0}W_{0,\alpha}$ has non-zero entries. We can then act by a diagonal matrix to find a representative such that $\sum_{1\leq \alpha \leq d_0}W_{0,\alpha}=(1,\ldots,1)^\top$.  This representative is unique up to a $\Z_m\wr S_n$ action described below. Note that $\Cnfat'$ contains the subspace defined in \ref{SS:CMdesc}. 

In this way, we can characterise a point of $\Cnfat'$ by the $2n+2n|\dfat|$ variables $(x_i,p_i,v_{s,\alpha,i},w_{s,\alpha,i})$ such that  $(x_i)\in \Cnreg$, together with the $2n$ constraints 
\begin{equation} \label{Eq:constr}
 \sum_{1 \leq \alpha \leq d_0} w_{0,\alpha,i}=1\,, \qquad 
 \sum_{s \in I} f_{ii}^{(s)}=-|\lambt|\,, \,\, \text{ where }
  f_{ij}^{(s)}:=\sum_{1\leq \alpha \leq d_s} w_{s,\alpha,i}v_{s,\alpha,j}\,,
\end{equation}
by considering the following matrices  
\begin{equation} \label{Eq:spinCoord}
\begin{aligned}
  &X_s=\diag(x_1,\ldots,x_n)\,, \quad 
  (W_{s,\alpha})_i=w_{s,\alpha,i}\,, \quad (V_{s,\alpha})_i=v_{s,\alpha,i}\,, \\
  &(Y_s)_{ii}= p_i +  \frac{1}{x_i} \sum_{0\leq r \leq s} (\lambda_r+f_{ii}^{(r)}) + \frac{1}{x_i} \sum_{r\in I}\frac{r-m}{m} (\lambda_r+f_{ii}^{(r)})\,, \\
  &(Y_s)_{ij}= \sum_{0\leq r \leq s}\frac{x_i^{m-1+r-s}x_j^{s-r}}{x_i^m-x_j^m} f_{ij}^{(r)}
  + \sum_{s<r \leq m-1}\frac{x_i^{r-s-1}x_j^{m+s-r}}{x_i^m-x_j^m} f_{ij}^{(r)}\,, \quad \text{for }i\neq j\,.
\end{aligned}
\end{equation}
This choice is unique up to $\Z_m\wr S_n$ action, which acts by permutation of the entries using $S_n$, and by $(x_1,\ldots,x_n)\mapsto(\mu^r x_1,\ldots,\mu^r x_n)$ using $\Z_m$ where $\mu$ is a primitive $m$-th root of unity. It is easy to see that we have the normalisation 
\begin{equation} \label{EqYsii}
 \sum_{s\in I}(Y_s)_{ii}=m p_i\,.
\end{equation}
In the case $\dfat=(1,0,\ldots,0)$, we can recover \eqref{Eq:Cn1} from \eqref{Eq:spinCoord} by shifting each variable $p_i$ by a multiple of $x_i^{-1}$, since we have $f_{ij}^{(0)}=-|\lambt|$ while $f_{ij}^{(s)}=0$ for $s\neq 0$. 
In the case $\dfat=(d,\ldots,d)$,  our choice of parametrisation is similar to \cite[(6.24-6.25)]{CS}, with the addition of the first $n$  constraints in \eqref{Eq:constr} due to our choice of a finite residual gauge fixing.

\begin{lem}
 The Poisson bracket evaluated on the  $2n+2n|\dfat|$ variables $(x_i,p_i,v_{s,\alpha,i},w_{s,\alpha,i})$ is given by 
 \begin{subequations}
  \begin{align}
   &\br{x_i,x_j}=0\,, \quad \br{x_i,p_j}=\frac1m \delta_{ij}\,, \quad \br{p_i,p_j}=0\,,  \label{Eq:brCo1}\\
   &\br{x_i,v_{s,\beta,j}}=0\,,\,\,\br{x_i,w_{s,\beta,j}}=0\,, \quad \br{p_i,v_{s,\beta,j}}=0\,,\,\,\br{p_i,w_{s,\beta,j}}=0 \,, \label{Eq:brCo2}\\
   &\br{v_{r,\alpha,i},v_{s,\beta,j}}=\delta_{ij} (\delta_{0r}v_{s,\beta,j}-\delta_{0s}v_{r,\alpha,i})\,, \label{Eq:brCo3} \\
   &\br{v_{r,\alpha,i},w_{s,\beta,j}}=\delta_{rs}\delta_{\alpha,\beta}\delta_{ij}-\delta_{0r}\delta_{ij}w_{s,\beta,j}\,, \quad 
   \br{w_{r,\alpha,i},w_{s,\beta,j} }=0 \,. \label{Eq:brCo4}
  \end{align}
 \end{subequations}
\end{lem}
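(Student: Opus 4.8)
## Proof Plan

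The plan is to compute the reduced Poisson bracket directly by tracking how the ambient bracket \eqref{Eq:PB} on $\Rep(\CC\bar Q_\dfat,\ntil)$ descends to the slice after gauge fixing. The key conceptual point is that all of $x_i, p_i, v_{s,\alpha,i}, w_{s,\alpha,i}$ are defined as functions on $\Cnfat'$ via the parametrisation \eqref{Eq:spinCoord}, so the brackets in the statement are really the images under Hamiltonian reduction of the brackets of suitable $\Gl(\nfat)$-invariant functions on $\mut^{-1}(\lambt\cdot\Id)$. I would first observe that the variables $x_i$ are (up to the $\Z_m$ ambiguity) eigenvalues of the invariant matrix $X_0\cdots X_{m-1}$, while $v_{s,\alpha,i}$ and $w_{s,\alpha,i}$ are entries of $V_{s,\alpha}$ and $W_{s,\alpha}$ in the gauge where $X_s=D$; the $p_i$ are, by \eqref{EqYsii}, essentially $\frac1m\sum_s (Y_s)_{ii}$ in that same gauge. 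The cleanest route is to invoke the general machinery of double brackets that Section~\ref{S:Dbr} sets up: the statement should follow as a special case (with $\dfat$ arbitrary rather than $(1,0,\ldots,0)$) of the computations referenced there, so I would structure the proof as "this is the evaluation, at the chosen representatives \eqref{Eq:spinCoord}, of the double-bracket formulas of Section~\ref{S:Dbr}," and then carry out the reduction explicitly.

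The main steps I would carry out are as follows. First, \eqref{Eq:brCo1} and the statement that $x_i$ Poisson-commutes with everything except $p_j$: since $x_i^m$ is an eigenvalue of $X_0\cdots X_{m-1}$, which involves only the $X_s$, and the bracket \eqref{Eq:PB} pairs $X_s$ only with $Y_s$ (trivially with $V,W$ and with other $X,Y$), one gets $\br{x_i,x_j}=0$, $\br{x_i,v}=\br{x_i,w}=0$ immediately; the canonical pairing $\br{x_i,p_j}=\frac1m\delta_{ij}$ comes from $\br{(X_s)_{ii},(Y_s)_{ii}}=1$ summed over the $m$ values of $s$ and divided by $m$ via \eqref{EqYsii}, together with the fact that changing the eigenvector normalisation (the residual gauge) does not affect this leading term. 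Second, $\br{p_i,p_j}=0$ and $\br{p_i,v}=\br{p_i,w}=0$: here I would argue that $p_i$ can be taken to be a function of $x_i$-eigendata and $\tr$-type quantities that are known to Poisson-commute — concretely, $mp_i$ is a "diagonal momentum" and the vanishing of these brackets is exactly the content of the corresponding double-bracket lemma in Section~\ref{S:Dbr}; alternatively one checks directly that $\sum_s(Y_s)_{ii}$ brackets trivially with $V_{s,\alpha}, W_{s,\alpha}$ entries because \eqref{Eq:PB} gives no $Y$–$V$ or $Y$–$W$ pairing. Third, the spin brackets \eqref{Eq:brCo3}, \eqref{Eq:brCo4}: these are where the gauge fixing genuinely enters. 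The naive ambient bracket would give only $\br{(V_{r,\alpha})_i,(W_{s,\beta})_j}=\delta_{rs}\delta_{\alpha\beta}\delta_{ij}$ and zero otherwise, but because we have fixed the residual torus gauge by imposing $\sum_{\alpha}W_{0,\alpha}=(1,\ldots,1)^\top$, the reduced bracket of two invariant functions acquires correction terms from the Dirac/reduction procedure — precisely the $\delta_{0r}$, $\delta_{0s}$ terms. I would derive these by writing the gauge-fixing constraints $\phi_i := \sum_\alpha w_{0,\alpha,i}-1=0$ and computing the Dirac bracket, or equivalently by tracking the infinitesimal gauge transformation needed to restore the normalisation after an elementary bracket and adding the corresponding moment-map correction.

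The hard part will be Step three — getting the precise form of the correction terms in \eqref{Eq:brCo3}–\eqref{Eq:brCo4}, i.e. justifying that the residual gauge fixing $\sum_\alpha W_{0,\alpha}=(1,\ldots,1)^\top$ produces exactly the shifts $-\delta_{0s}v_{r,\alpha,i}$, $+\delta_{0r}v_{s,\beta,j}$, $-\delta_{0r}\delta_{ij}w_{s,\beta,j}$ and leaves $\br{w,w}=0$. The mechanism is standard (this is a gauged $\gl_1^{\,n}$ reduction where the $V,W$ at the vertex $0$ carry the relevant weights), and the asymmetry between $v$ and $w$ in \eqref{Eq:brCo4} reflects that the constraint is linear in $w$ only; but one must be careful about signs and about which index ($i$ or $j$) the Kronecker delta attaches to. Since the paper explicitly says these formulas are re-derived in Section~\ref{S:Dbr} using double brackets, the intended proof is surely to quote that computation: so my write-up would be a short paragraph reducing \eqref{Eq:brCo1}–\eqref{Eq:brCo4} to the relevant double-bracket identities (the analogues of Lemma~\ref{Lem:dbrYY} and its companions for $V,W$), evaluated at \eqref{Eq:spinCoord}, and noting that \eqref{Eq:PB} together with the moment-map correction from fixing $X_s=D$ and $\sum_\alpha W_{0,\alpha}=(1,\ldots,1)^\top$ yields the claimed expressions; the verification that no further terms survive in $\br{x_i,\cdot}$ and $\br{p_i,\cdot}$ follows from the block structure of \eqref{Eq:PB} as in Steps one and two.
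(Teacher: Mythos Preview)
Your Dirac-bracket route is viable in principle but is \emph{not} what the paper does. The paper never tracks gauge-fixing corrections; it works entirely with $\Gl(\nfat)$-invariant trace functions $\tr X^{km}$, $\tr YX^{km+1}$, and $\hat t^k_{r\alpha,s\beta}:=\tr(W_{r,\alpha}V_{s,\beta}X^{km+r-s})$, whose brackets on the unreduced space are computed in Lemma~\ref{Lem:dbrCoord} (identities \eqref{Eq:dbrCa}--\eqref{Eq:dbrCe}) and descend unchanged to $\Cnfat'$. One then writes each invariant in local coordinates---for instance $\tr YX^{km+1}=m\sum_j p_jx_j^{km+1}$ and, crucially, $\sum_{\alpha}\hat t^k_{0\alpha,s\beta}=\sum_j v_{s,\beta,j}x_j^{km-s}$ thanks to the constraint $\sum_\alpha w_{0,\alpha,j}=1$---and inverts the resulting Vandermonde systems in the $x_j$ to extract the coordinate brackets. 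In this scheme the $\delta_{0r},\delta_{0s}$ terms in \eqref{Eq:brCo3}--\eqref{Eq:brCo4} come not from a reduction correction but simply from the presence of a $W_{0,\alpha}$ factor inside the invariant that encodes $v_{s,\beta,j}$: they fall straight out of \eqref{Eq:dbrCe} upon setting $r'=0$ (resp.\ $r=r'=0$) and summing over $\alpha'$ (resp.\ $\alpha,\alpha'$). Your approach would make the origin of the corrections (the normalisation constraint) more transparent, at the cost of carrying the full Dirac matrix; the paper's approach is cleaner because invariance makes all such corrections vanish automatically.

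Your Step~two has a genuine gap as written. You argue $\br{p_i,v}=\br{p_i,w}=0$ because ``$\sum_s(Y_s)_{ii}$ brackets trivially with $V,W$ since \eqref{Eq:PB} gives no $Y$--$V$ pairing''. But $\sum_s(Y_s)_{ii}$ is not $\Gl(\nfat)$-invariant, so its reduced bracket need not coincide with its ambient bracket---this is precisely the phenomenon you (correctly) flag for the spin variables in Step~three. In your framework you owe a Dirac correction here as well, and you have not argued it vanishes; indeed $\sum_s(Y_s)_{ii}$ has nonzero ambient bracket with entries of $X_r$ and hence with the moment-map constraints, so the correction is not obviously zero. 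Your parenthetical alternative (``$p_i$ as a function of $\tr$-type quantities'') is in fact the paper's route via $\tr YX^{km+1}$ and \eqref{Eq:dbrCd}, and developing that line would close the gap.
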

\begin{proof}
This result is a direct application of Lemma \ref{Lem:dbrCoord}. To see this, we note that the following expressions can be written in terms of the local variables on $\Cnfat'$ 
\begin{subequations}
 \begin{align}
\tr X^{km}=& m\sum_{j=1}^n x_j^{km}\,, \quad \tr YX^{km+1}= m\sum_{j=1}^n p_jx_j^{km+1}\,,\label{Eq:CoordBr1} \\
\hat{t}^k_{r\alpha,s\beta}=&\tr W_{r,\alpha}V_{s,\beta}X^{km+r-s}
=\sum_{j=1}^n  w_{r,\alpha,j}v_{s,\beta,j}x_j^{km+r-s}\,. \label{Eq:CoordBr2} 
 \end{align}
\end{subequations}
In particular, we have that 
\begin{equation}
 \sum_{\alpha=1}^{d_0}\hat{t}^k_{0\alpha,s\beta}=\sum_{j=1}^n  v_{s,\beta,j}x_j^{km+r-s}\,.
\end{equation}
It is then a standard computation to see that \eqref{Eq:dbrCa}--\eqref{Eq:dbrCb} written in coordinates yield \eqref{Eq:brCo1}. After these identities are established, we also get  from \eqref{Eq:dbrCc}--\eqref{Eq:dbrCd} that \eqref{Eq:brCo2} holds.

Next, using \eqref{Eq:dbrCe} with $r=r'=0$ and summing over all $\alpha,\alpha'\in \{1,\ldots,d_0\}$, we find the identity \eqref{Eq:brCo3}.  Taking $r'=0$ and summing over $\alpha'$ also in \eqref{Eq:dbrCe}, we find the first identity in  \eqref{Eq:brCo4}. Finally, we can use these Poisson brackets and \eqref{Eq:dbrCe} for arbitrary $r,r'$ to obtain the second equality in \eqref{Eq:brCo4}. 
\end{proof}
\begin{rem}
 The complicated Poisson brackets appearing in \eqref{Eq:brCo3}--\eqref{Eq:brCo4} are due to the gauge fixing. Indeed, take the $2n+2n|\dfat|$ complex Darboux coordinates 
 \begin{equation*}
  x_i,p_i,\bar{v}_{s,\alpha,i},\bar{w}_{s,\alpha,i}\,,
 \end{equation*}
with non-zero Poisson bracket given by 
\begin{equation}
 \br{x_i,p_j}=\frac1m \delta_{ij}\,, \quad \br{\bar{v}_{r,\alpha,i},\bar{w}_{s,\beta,j}}=\delta_{rs}\delta_{\alpha\beta}\delta_{ij}\,.
\end{equation}
If we restrict our attention to the variables $(x_i,p_i)$ and 
\begin{equation}
w_{s,\alpha,i}=\bar{w}_{s,\alpha,i} D_i^{-1},\,\,
 v_{s,\alpha,i}= \bar{v}_{s,\alpha,i} D_i,\quad D_i:=\sum_{\alpha=1}^{d_0} \bar{w}_{0,\alpha,i}\,,
\end{equation}
we note that $\sum_{1\leq \alpha \leq d_0} w_{0,\alpha,i}=1$, while the Poisson bracket takes the form \eqref{Eq:brCo1}--\eqref{Eq:brCo4}. Furthermore, the elements 
\begin{equation} \label{Eq:fCas}
 \sum_{s\in I} f_{jj}^{(s)}:=\sum_{s\in I} \sum_{1\leq \alpha \leq d_s} w_{s,\alpha,j}v_{s,\alpha,j}\,,
\end{equation}
are Casimirs. Fixing the values of the functions in \eqref{Eq:fCas} to $-|\lambt|$, we get the the variables introduced on $\Cnfat'$ with the constraints \eqref{Eq:constr}. 
\end{rem}

\subsection{Superintegrability} \label{SS:spinCMint}

Let $X:=\sum_s X_s$ and $Y:= \sum_s Y_s$. We first recall the following trivial result. 
\begin{lem}\label{Lem:YCM}
The functions $\tr(Y^{km})$ are Poisson commuting. 
\end{lem}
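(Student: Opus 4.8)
The plan is to show that $\{\tr(Y^{km}),\tr(Y^{lm})\}=0$ for all $k,l\in\N$ by reducing the computation to the double bracket formalism on $\Rep(\CC\bar Q_\dfat,\ntil)$ before Hamiltonian reduction, since functions on $\Cnfat$ that come from $\Gl(\nfat)$-invariant functions upstairs have their reduced Poisson bracket computed by the induced bracket on invariants. Concretely, I would first express $\tr(Y^{km})$ as the (invariant) trace of a word in the matrices $Y_s$: because $Y$ is the block matrix whose only nonzero blocks are the $Y_s$ placed in position $(s+1,s)$, the power $Y^{km}$ is block diagonal and its $s$-th diagonal block is the cyclically ordered product $Y_{s-1}Y_{s-2}\cdots Y_{s-km}$ (indices mod $m$), so $\tr(Y^{km})=\sum_{s\in I}\tr(Y_{s-1}\cdots Y_{s-km})$, a sum of traces of words involving only the $Y_r$'s.

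Next I would invoke the double bracket on $\Rep(\CC\bar Q_\dfat,\ntil)$, which is anticipated in Section \ref{S:Dbr} and whose relevant instance is recorded as Lemma \ref{Lem:dbrYY}: the only nonzero elementary double brackets among the generators pair an $X_r$ with a $Y_r$ (cf. \eqref{Eq:PB}), so $\dgal{Y_r,Y_s}=0$ for all $r,s$. By the Leibniz rule for double brackets, the double bracket of any word in the $Y$'s with any other word in the $Y$'s vanishes identically, and hence so does the associated (Loday/Van den Bergh) bracket on the traces of such words. Passing this vanishing through Hamiltonian reduction — which is legitimate because each $\tr(Y^{km})$ is $\Gl(\nfat)$-invariant and the reduced bracket is inherited from the bracket on invariants — yields $\{\tr(Y^{km}),\tr(Y^{lm})\}=0$ on $\Cnfat$.

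Alternatively, and perhaps more transparently for a reader who only wants the statement, I would give the short local-coordinate argument on the dense open set $\Cnfat'$: by \eqref{EqYsii} we have $\sum_{s}(Y_s)_{ii}=mp_i$, and more generally the diagonal blocks of $Y^{km}$ are polynomials in the $x_i$, $p_i$ and the spin combinations $f_{ij}^{(r)}$; one checks using \eqref{Eq:brCo1}--\eqref{Eq:brCo4} that $\tr(Y^{km})$ depends on the momenta only through the symmetric functions $\sum_i p_i^a$ times spin-and-position factors, and in fact that all $\tr(Y^{km})$ lie in a common Poisson-commutative subalgebra; density of $\Cnfat'$ in $\Cnfat$ then extends the identity everywhere. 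I would present the double-bracket argument as the main proof since it is coordinate-free and is exactly the mechanism used throughout Section \ref{S:Dbr}.

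The only real obstacle is bookkeeping rather than mathematics: one must be careful that the block structure of $Y$ makes $\dgal{Y,Y}$ genuinely vanish as a double bracket valued in the relevant bimodule (no hidden contribution from the framing matrices $V_{s,\alpha},W_{s,\alpha}$ enters, which is clear since $Y$ involves none of them), and that the reduction step is applied to honest invariant functions. Both points are handled by citing Lemma \ref{Lem:dbrYY} and the standard compatibility of double brackets with Hamiltonian reduction, so the proof is essentially a one-line consequence once the setup of Section \ref{S:Dbr} is in place; in the text I would simply write that the claim follows from $\dgal{Y,Y}=0$, proved in Lemma \ref{Lem:dbrYY}, together with the Leibniz rule.
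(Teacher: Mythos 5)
Your proof is correct and follows the same route as the paper: the paper treats this as a trivial consequence of $\dgal{Y,Y}=0$ (recorded in Lemma \ref{Lem:dbrYY}\,a), which is exactly your main argument via the Leibniz rule for double brackets and the invariance of $\tr(Y^{km})$ under $\Gl(\nfat)$. The coordinate-based alternative you sketch is unnecessary and vaguer than the double-bracket argument, so you are right to relegate it.
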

The next result follows from Lemma \ref{Lem:dbrYY}. 
\begin{lem} \label{L:supY}
 Fix $s,r \in I$ such that $d_s,d_r \neq 0$. Let $\rho_{r,s}$ be the representative of $s-r$ in $\{0,\ldots,m-1\}$. 
 Then, for any $1\leq \alpha \leq d_r$, $1\leq \beta \leq d_s$ and $k \in \N$, the function 
 \begin{equation}
  t_{r \alpha, s\beta}^{k}\,=\,\tr \left(W_{r,\alpha}V_{s,\beta}  Y^{km+\rho_{r,s}}  \right)
 \end{equation}
 Poisson commute with $h^{m,i}=\frac{1}{mi}\tr Y^{mi}$. 
\end{lem}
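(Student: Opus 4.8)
The plan is to exhibit $t_{r\alpha,s\beta}^k$ as a function whose Poisson bracket with $h^{m,i}$ vanishes by reducing the computation to a statement about double brackets on the representation space $\Rep(\CC\bar Q_\dfat,\ntil)$, which is then handled (as the lemma's proof announces) by Lemma \ref{Lem:dbrYY}. First I would recall that $h^{m,i}=\frac1{mi}\tr Y^{mi}$ with $Y=\sum_s Y_s$, so that $Y^{mi}$ is block diagonal and $h^{m,i}$ descends to a well-defined function on $\Cnfat$; likewise the combination $W_{r,\alpha}V_{s,\beta}Y^{km+\rho_{r,s}}$ has matching source and target block — the power $km+\rho_{r,s}$ of $Y$ being exactly the shift needed so that $Y^{km+\rho_{r,s}}$ maps $\VV_r$ into $\VV_s$ — and hence $\tr(W_{r,\alpha}V_{s,\beta}Y^{km+\rho_{r,s}})$ is a genuine (gauge-invariant) function on $\Cnfat$. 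The condition $d_r,d_s\neq0$ simply guarantees that the arrows $w_{r,\alpha}$, $v_{s,\beta}$ exist, so the trace is non-vacuous.

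Next I would compute $\br{h^{m,i},t_{r\alpha,s\beta}^k}$ on $\Cnfat$ by lifting to the unreduced Poisson bracket \eqref{Eq:PB} on $\Rep(\CC\bar Q_\dfat,\ntil)$, where both functions are $\Gl(\nfat)$-invariant representatives. Using the Leibniz rule, the bracket splits into three pieces according to whether the $\{H,g\}$-derivative falls on $W_{r,\alpha}$, on $V_{s,\beta}$, or on one of the $Y$'s inside $Y^{km+\rho_{r,s}}$. Since the only nonzero brackets involving $Y_s$-entries are $\br{(X_r)_{ij},(Y_s)_{kl}}$ and $h^{m,i}$ contains no $X$'s, the derivative of $h^{m,i}$ with respect to any matrix entry only ever produces terms proportional to powers of $Y$; concretely $\frac{\partial}{\partial (X_t)}\tr Y^{mi}\sim Y^{mi-1}$ placed appropriately. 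The key algebraic input — exactly what Lemma \ref{Lem:dbrYY} is there to supply — is the double-bracket identity $\dgal{Y^a,Y^b}$, from which one extracts $\br{\tr Y^{mi},\tr(W_{r,\alpha}V_{s,\beta}Y^{c})}$. The announced output in the analogous spinless statement is $\br{h^{m,i},\tr(XY^{jm+1})}=-\tr Y^{m(i+j)}$; the present case should yield, after telescoping the cyclic sum of terms $\tr(W_{r,\alpha}V_{s,\beta}Y^{\cdots}\{Y,Y\}Y^{\cdots})$, a similar collapse — and in fact here the net result is $0$, because the ``moving the derivative through $Y$'' contributions cancel in pairs while the contributions hitting $W_{r,\alpha}$ or $V_{s,\beta}$ vanish (those variables have zero bracket with every $Y$-entry). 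So the mechanism is: the $W$- and $V$-legs contribute nothing, and the $Y$-legs contribute a telescoping sum that sums to zero.

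More carefully, the cleanest route is to invoke Lemma \ref{Lem:dbrYY} directly in the form it is used elsewhere in the paper: it computes $\dgal{\tr Y^{mi},-}$ (equivalently the Hamiltonian vector field of $h^{m,i}$) on the generators, and one checks that $h^{m,i}$ acts on $W_{r,\alpha}$, $V_{s,\beta}$ trivially (those are not coupled to $Y$) and acts on the string $Y^{km+\rho_{r,s}}$ by a derivation whose total effect inside a trace against $W_{r,\alpha}V_{s,\beta}$ produces two equal-and-opposite boundary terms. I would therefore structure the proof as: (i) reduce to the unreduced bracket; (ii) apply Leibniz; (iii) discard the $W$- and $V$-contributions by \eqref{Eq:PB}; (iv) quote Lemma \ref{Lem:dbrYY} for the $Y^{mi}$-versus-$Y^{km+\rho_{r,s}}$ part and observe the telescoping cancellation, concluding $\br{h^{m,i},t_{r\alpha,s\beta}^k}=0$.

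The main obstacle I anticipate is purely bookkeeping: keeping track of the block structure (source/target indices in $\Z_m$) so that the power $\rho_{r,s}$ of $Y$ threading from block $r$ to block $s$ is handled consistently, and making sure the cyclic permutation inside the trace that effects the telescoping is legitimate given that $W_{r,\alpha}V_{s,\beta}Y^{km+\rho_{r,s}}$ is a block-diagonal $\End(\VV_r)$-valued (equivalently scalar, since $\VV_\infty=\CC$) quantity. Once the trace is recognised as cyclic on the relevant block, the cancellation is immediate; the computation is entirely analogous to the $\br{h^{m,i},\tr(XY^{jm+1})}$ calculation carried out in Lemma \ref{Lem:dbrYY}, with $X$ replaced by $W_{r,\alpha}V_{s,\beta}$ and with the constant $+1$ shift replaced by the shift $\rho_{r,s}$ dictated by the quiver grading, so no genuinely new difficulty arises beyond that already treated in Section \ref{S:Dbr}.
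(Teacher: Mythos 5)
Your proposal is correct and follows essentially the same route as the paper: lift to the unreduced bracket \eqref{Eq:PB}, apply the Leibniz rule, and observe that the statement is exactly identity \eqref{Eq:YYb} of Lemma \ref{Lem:dbrYY}. One simplification worth noting: the ``telescoping cancellation'' you anticipate among the $Y$-legs in step (iv) does not actually occur because there is nothing to cancel --- the elementary brackets $\br{(Y_r)_{ij},(Y_s)_{kl}}$, $\br{(Y_r)_{ij},(V_{s,\beta})_k}$ and $\br{(Y_r)_{ij},(W_{s,\beta})_k}$ all vanish identically by \eqref{Eq:PB} (equivalently, $\dgal{Y,Y}=\dgal{Y,W_{r,\alpha}}=\dgal{Y,V_{s,\beta}}=0$), so every term in the Leibniz expansion is zero individually, and the paper's proof of \eqref{Eq:YYb} is precisely this one-line observation; the genuine telescoping computation is only needed for the second identity in \eqref{Eq:YYa}, where $X$ appears.
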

The generalisations of Wojciechowski's first integrals $C_{j,k}^{m,i}$ defined in \eqref{Eq:Cm} remain first integrals of $h^{m,i}$ if we add framing arrows. Indeed, it suffices to reproduce the proof of Lemma \ref{Lem:WojInt} in that case. 

 \begin{prop} \label{Pr:supspinCM}
 Fix $i \in \N^\times$. The function $h^{m,i}$ is maximally superintegrable.
\end{prop}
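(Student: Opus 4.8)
The plan is to mirror the argument of Proposition \ref{Pr:supCM}, upgrading from the spinless case to the full spin case by using the extra functions coming from the framing arrows. First I would record that $h^{m,i}$ lives on a phase space of dimension $2n|\dfat|$, so maximal superintegrability requires $2n|\dfat|-1$ functionally independent constants of motion. The commuting family supplying the ``integrable'' half is $h^{m,1},\dots,h^{m,n}$ together with, for a fixed vertex $s$ with $d_s\ge1$, a suitable collection of the spin-Hamiltonians built from the $f^{(s)}_{ij}$ (these are the standard Gibbons--Hermsen-type integrals on $\Cnfat$, as used in \cite{CS,GG18}); one needs $n|\dfat|$ of these in involution. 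The extra $n|\dfat|-1$ independent integrals then come from two sources: the Wojciechowski-type functions $C^{m,i}_{j,1}$ for $j=2,\dots,n$ of Lemma \ref{Lem:WojInt} (whose validity in the framed case was just noted by reproducing the proof of that lemma), and the functions $C^{H}_{j,k}$ of Theorem \ref{Thm:sup} built out of the $t^{k}_{r\alpha,s\beta}$ of Lemma \ref{L:supY}, which Poisson-commute in the required double-bracket sense with $h^{m,i}$ so that Theorem \ref{Thm:sup} applies. (Since $\br{h^{m,i},t^{k}_{r\alpha,s\beta}}$ is again of the form $-t^{k'}_{r\alpha,s\beta}$ by Lemma \ref{Lem:dbrYY}, the hypothesis $\br{H,\br{H,g}}=\alpha g$ of Theorem \ref{Thm:sup} holds with $\alpha=0$, so the $C^{H}_{j,k}$ are genuine first integrals.)

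The heart of the proof is the functional independence statement, and I would carry it out exactly as flagged: ``as in \cite{Woj}''. Concretely, one evaluates all the candidate integrals at a convenient point of the dense open subset $\Cnfat'$, using the local coordinates $(x_i,p_i,v_{s,\alpha,i},w_{s,\alpha,i})$ of \ref{sss:Spinloc} and the explicit matrix realisation \eqref{Eq:spinCoord}. At a generic point one arranges the coupling data so that the ``interaction'' terms are small relative to the leading diagonal terms; then $h^{m,k}\approx \tfrac1k\sum_i p_i^{k}$, $\tr(W_{r,\alpha}V_{s,\beta}Y^{\ell})\approx \sum_i w_{r,\alpha,i}v_{s,\beta,i}p_i^{\ell}$ up to the shift in \eqref{Eq:CoordBr2}, and the $C$'s become the corresponding Wronskian-type combinations. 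Computing the Jacobian of this collection with respect to a well-chosen subset of the coordinates $(x_i,p_i,v_{s,\alpha,i},w_{s,\alpha,i})$, one gets a block-triangular (or Vandermonde-like) determinant that is manifestly nonzero for generic parameter values, hence the differentials are linearly independent on a dense open set, which is what functional independence means.

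The main obstacle I anticipate is purely bookkeeping: one must exhibit a commuting family of the correct size $n|\dfat|$ extending the $h^{m,i}$ — i.e.\ the spin degrees of freedom must be ``integrated'' too — and then check that the chosen $n|\dfat|-1$ extra integrals, drawn from the $C^{m,i}_{j,1}$ and the $C^{H}_{j,k}$ coming from distinct framing labels $(r,\alpha,s,\beta)$, are independent \emph{from each other and from the commuting family simultaneously}. This is where the Vandermonde/Wronskian structure of the evaluated Jacobian must be organised carefully so that the framing-index blocks and the position-momentum block do not collapse; the genericity conditions \eqref{Eq:spinReg} on $\lambt$ are exactly what guarantees the relevant minors do not vanish. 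Once the Jacobian is seen to have full rank at one point of $\Cnfat'$, superintegrability of $h^{m,i}$ follows, completing the proof.
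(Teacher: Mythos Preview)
Your strategy is in the right direction, but the functional-independence step as written is not a proof, and the route differs substantively from the paper's.

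The paper avoids computing $\tr Y^{km}$ in the coordinates \eqref{Eq:spinCoord} (where $Y$ has complicated off-diagonal entries) altogether. Instead it introduces a diffeomorphism $\Psi:\CspinOp\to\Cspin$, with $\dfat^\circ=(d_{m-1},\dots,d_0)$ and $\lambt^\circ=-\lambt$, that exchanges the roles of $X$ and $Y$. Pulled back through $\Psi$ and evaluated on $\CspinOp'$, the first integrals become \emph{exact} power sums---$\Psi^\ast(\tr Y^{km})=m\sum_j x_j^{km}$, $\Psi^\ast(\tr XY^{km+1})=m\sum_j p_jx_j^{km+1}$, $\Psi^\ast(t^k_{r\alpha,s\beta})=-\sum_j w_{m-r,\alpha,j}v_{m-s,\beta,j}x_j^{mk+\rho_{r,s}}$---with no interaction terms at all. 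The relevant Jacobians are then honest Vandermonde-type matrices, nonvanishing on $\Cnreg$, and the count of $2n|\dfat|-1$ independent integrals is completed block by block using only the $h^{m,k}$, the $C^{m,i}_{k,1}$, and the $t^k_{r\alpha,s\beta}$ themselves.

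By contrast, you propose to stay on $\Cnfat'$ and argue that ``interaction terms are small''. This is where the gap lies: the parameters $\lambt$ are fixed (they define the space) and the spins are constrained by \eqref{Eq:constr}, so you cannot freely tune ``coupling data'' to make the off-diagonal entries of $Y$ negligible; an honest version of your idea would need a limiting procedure (e.g.\ $|x_i|\to\infty$ with controlled ratios) spelled out carefully, which you have not done. Your claim that the regularity conditions \eqref{Eq:spinReg} are ``exactly what guarantees the relevant minors do not vanish'' is also mistaken---those conditions ensure smoothness of $\Cnfat$ and play no role in the Jacobian check. Two smaller corrections: Lemma \ref{L:supY} says $\br{h^{m,i},t^{k}_{r\alpha,s\beta}}=0$, so the $t^k$ are already first integrals (no Theorem \ref{Thm:sup} is needed, and your formula $\br{h^{m,i},t^k}=-t^{k'}$ is wrong); and maximal superintegrability does not require exhibiting an intermediate Liouville family of size $n|\dfat|$---the paper never constructs one.
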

\begin{proof} 
Let us use the full notation $\Cspin$ of the space to emphasise the dependence on the framing $\dfat$ and the parameter $\lambt$. We form $\dfat^\circ=(d_{m-1},\ldots,d_0)$ and $\lambt^\circ=-\lambt$, noting that $\lambt^\circ$ satisfies \eqref{Eq:spinReg} just as $\tilde{\lambda}$ does.  We can then define the space $\CspinOp$ associated with $\dfat^\circ,\lambt^\circ$, which admits a local description on a dense subspace $\CspinOp'$ by \ref{sss:Spinloc}. 
We can take the $2n|\dfat|=2n|\dfat^\circ|$ elements 
\begin{equation} \label{Eq:coordOp}
 x_j,\,\,p_j,\,\, v_{s,\alpha,j},\,\, w_{s,\alpha,j},\quad j=1,\ldots,n,\,\, (s,\alpha)\neq (0,1)\,,
\end{equation}
as coordinates on $\CspinOp'$. Indeed, in view of the constraints \eqref{Eq:constr}, we can see the $(w_{0,1,j})$ as functions of the variables in \eqref{Eq:coordOp}, and the same holds for the $(v_{0,1,j})$ generically. 

As in \cite[Proposition 6.7]{CS}, we note that there exists a diffeomorphism\footnote{This is not a Poisson isomorphism.} $\Psi:\CspinOp \to \Cspin$ given by 
\begin{equation}
 \Psi(X_s)=Y_{m-s-1},\,\, \Psi(Y_s)=X_{m-s-1},\,\, \Psi(V_{s,\alpha})=V_{m-s,\alpha},\,\,\Psi(W_{s,\alpha})=-W_{m-s,\alpha}\,.
\end{equation}
 We can write the following local expressions on $\CspinOp'$ using \eqref{Eq:spinCoord} and \eqref{EqYsii}
\begin{subequations}
 \begin{align}
\Psi^\ast(\tr Y^{km})=&m \tr(X_0\ldots X_{m-1})^k = m\sum_{j=1}^n x_j^{km}\,,  \label{Eq:locXY1}\\
\Psi^\ast(\tr XY^{km+1})=&\sum_{s=0}^{m-1} \tr(Y_s X^{km+1})  =m\sum_{j=1}^n p_j x_j^{km+1}  \,, \label{Eq:locXY2} \\
\Psi^\ast(t_{r\alpha,s\beta}^k)=&-\tr(W_{m-r,\alpha}V_{m-s,\beta}X^{k+m\rho_{r,s}})
=-\sum_{j=1}^n w_{m-r,\alpha,j}v_{m-s,\beta,j}x_j^{mk+\rho_{r,s}}\,, \label{Eq:locXY3}
 \end{align}
\end{subequations}
with the notations of Lemma \ref{L:supY}. (Here, we see the different matrices as endomorphisms of the vector space $\VV=(\oplus_{s\in I}  \VV_s) \oplus \VV_\infty$.) 

It is clear that the functions \eqref{Eq:locXY1} with $k=1,\ldots,n$ are functionally independent, since their Jacobian matrix with respect to the coordinates $(x_1,\ldots,x_n)$ is invertible as $(x_j)\in \Cnreg$. 
(Without loss of generality, we can replace one of these functions by the Hamiltonian of interest $h^{m,i}$.)
We also note that the functions \eqref{Eq:locXY2} with $k=1,\ldots,n$ can be used as coordinates instead of $(p_1,\ldots,p_n)$ since the Jacobian matrix with entries 
\begin{equation}
 \frac{\partial \Psi^\ast(\tr XY^{km+1})}{\partial p_j}=m\,x_j^{km+1}\,,
\end{equation}
is invertible on $\CspinOp'$. It then follows that the functions $C_{k,1}^{m,i}$ \eqref{Eq:Cm} with $k=2,\ldots,n$ provide another $n-1$ functionally independent first integrals of $h^{m,i}$ due to the identity 
\begin{equation}
 \frac{\partial \Psi^\ast C_{k,1}^{m,i}}{\partial \Psi^\ast(\tr XY^{jm+1})}=\delta_{kj} \Psi^\ast\tr(Y^{(i+1)m})-\delta_{1,j} \Psi^\ast\tr(Y^{(i+1)m}) \,,
\end{equation}
and the fact that $\Psi^\ast\tr(Y^{(i+1)m})$ is generically nonzero on $\CspinOp'$. Thus, we have $2n-1$ first integrals of $h^{m,i}$ whose Jacobian matrix taken with respect to the coordinates $(q_j,p_j)$ is invertible.  We need another $2n(|\dfat|-1)$ first integrals in order to get the desired $\dim(\Cspin)-1$ functionally independent first integrals of $h^{m,i}$.

Assume that $|\dfat|>1$ from now on, otherwise the proof can be concluded here. 
We will find $2n(|\dfat|-1)$ first integrals depending on the coordinates \eqref{Eq:coordOp} with the exception of the $(p_j)$, such that their Jacobian matrix taken with respect to the last $2n(|\dfat|-1)$ coordinates in \eqref{Eq:coordOp} is invertible. The functional independence of these new functions and the previous $2n-1$ ones will then follow from this result. 

We note that the $n$ first integrals  
\begin{equation}
\sum_{\alpha=1}^{d_0} \Psi^\ast(t_{0\alpha,s\beta}^k)=-\sum_{j=1}^n v_{m-s,\beta,j}x_j^{m(k+1)-s}\,,\quad k=1,\ldots,n\,,
\end{equation}
with $(s,\beta)\neq(0,1)$ only depend on the $2n$ coordinates $(q_j,v_{m-s,\beta,j})$. It is straightforward to check that their Jacobian matrix taken with respect to $(v_{m-s,\beta,j})$ is invertible, so that we get a total of $n(|\dfat|-1)$ additional first integrals which are all functionally independent.

There exists $s_+\in \{0,\ldots,m-1\}$ such that $s_+$ is the maximal index for which $d_s\neq 0$. Since $|\dfat|>1$, the pair $(s_+,d_{s_+})\neq (0,1)$ is such that $(v_{m-s_+,d_{s_+},j})$ are $n$ coordinates on  $\CspinOp'$ from the set \eqref{Eq:coordOp} by construction. Next, we note that the $n$ first integrals 
\begin{equation}
 \Psi^\ast(t_{r\alpha,s_+ d_{s_+}}^k)=-\sum_{j=1}^n w_{m-r,\alpha,j}\,v_{m-s_+,d_{s_+},j}\,x_j^{mk+\rho_{r,s_+}}\,,\quad k=1,\ldots,n\,, 
\end{equation}
with $(r,\alpha)\neq(0,1)$ only depend on the $3n$ coordinates $(q_j,w_{m-r,\alpha,j},v_{m-s_+,d_{s_+},j})$.  Since the last $n$ coordinates can be taken to be nonzero at a generic point, we get that  the Jacobian matrix 
\begin{equation}
 \frac{\partial \Psi^\ast(t_{r\alpha,s_+ d_{s_+}}^k)}{\partial w_{m-r,\alpha,j}}=- v_{m-s_+,d_{s_+},j}\,x_j^{mk+\rho_{r,s_+}}\,,
\end{equation}
is invertible, providing another  $n(|\dfat|-1)$ functionally independent first integrals. 
\end{proof}

\begin{rem}
 We have in fact an explicit integration for the flow of $h^{m,i}$ on the unreduced space  $\Rep(\CC \bar{Q}_\dfat, \ntil)$. It follows easily from the following form of the Hamiltonian vector field 
 \begin{equation*}
  \dot Y=0,\,\, \dot W_{s,\alpha}=0,\,\, \dot V_{s,\alpha}=0,\,\, \dot X=Y^{im-1}. 
 \end{equation*}
 This is computed using \eqref{Eq:PB}.
\end{rem}

\begin{rem}
 We can easily verify that the functions $(h^{m,i},t_{s \alpha,s \alpha}^i)$ with $1 \leq i \leq n$ and all possible $(s,\alpha)$ are pairwise Poisson commuting. One can further show that we can form a Liouville integrable system, e.g. by removing  the $(t_{01,01}^i)$ from these functions and then prove the functional independence of the remaining elements as in Proposition \ref{Pr:supspinCM}. This choice of functions is different from the one considered in \cite{CS} which is related to the KP hierarchy. 
\end{rem}


\section{Harmonic CM system} \label{SS:Harm}

In this section, we fix $\omega\in \CC^\times$ and we consider the Hamiltonian $H_\omega=\frac12 \tr (Y^2+\omega^2 X^2)$. We note that it can only be nonzero if $m=1$ or $m=2$. In those cases, we can remark the following result, see Lemma \ref{Lem:dbrLL} for its proof. 
\begin{lem}\label{Lem:Harm}
Let $m=1$ or $m=2$, and set $L=Y^2+\omega^2 X^2$. Then the functions $\tr(L^k)$ are Poisson commuting. 
\end{lem}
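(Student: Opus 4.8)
The statement to prove is Lemma \ref{Lem:Harm}: for $m=1$ or $m=2$ and $L=Y^2+\omega^2 X^2$, the functions $\tr(L^k)$ pairwise Poisson commute. The natural strategy is to reduce this to a computation on the unreduced representation space $\Rep(\CC\bar Q_\dfat,\ntil)$ where the Poisson bracket is the simple one in \eqref{Eq:PB}, and then pass to the quotient (Poisson brackets of invariant functions descend). So the plan is: first express everything in terms of double brackets (the tool announced for Section \ref{S:Dbr}), which package matrix-valued Poisson brackets in a way compatible with traces and matrix multiplication; then compute the double bracket $\dgal{L,L}$; and finally invoke the standard fact that if $\dgal{L,L}$ lies in the image of the multiplication-and-commutator maps — more precisely, if it can be written so that $\tr$ applied in both legs vanishes after (anti)symmetrisation — then all the $\tr(L^k)$ commute.

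First I would recall/record the double brackets of the building blocks. From \eqref{Eq:PB} we have $\dgal{X,Y}$ and $\dgal{Y,X}$ given by the canonical "$e_{ij}\otimes e_{kl}$" expressions, with $\dgal{X,X}=\dgal{Y,Y}=0$; the framing matrices $V_{s,\alpha},W_{s,\alpha}$ also contribute, but crucially they do not appear in $L$, so only the $X,Y$ brackets matter here (this is the same mechanism that makes $\tr Y^{km}$ commute in Lemma \ref{Lem:dbrYY}). The reason $m\le 2$ is forced is exactly that $X^2$ is block-diagonal (hence an endomorphism of each $\VV_s$) only when $m\mid 2$; for larger $m$, $Y^2+\omega^2X^2$ is not even a well-defined trace-able object in the reduced picture, matching the remark that $H_\omega$ "can only be nonzero if $m=1$ or $m=2$."

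The core computation is $\dgal{L,L}$ where $L=Y^2+\omega^2X^2$. By the Leibniz rule for double brackets, this expands into a sum of terms built from $\dgal{Y,Y}=0$, $\dgal{X,X}=0$, $\dgal{X,Y}$, $\dgal{Y,X}$, each multiplied on the left and right by monomials in $X$ and $Y$. The cross-terms $\omega^2\dgal{X,Y^2}$-type and $\omega^2\dgal{Y^2,X}$-type contributions will produce expressions like $\omega^2(X\otimes Y + Y\otimes X - \ldots)$ with the canonical bracket inserted; the key point is that these should organise into a sum of "inner" terms of the form $A'BA'' \otimes \ldots$ that vanish under $\tr\otimes\tr$ after antisymmetrisation, i.e. $\dgal{L,L}$ should be shown to lie in the kernel of the map that controls commutativity of traces (the analogue of $\dgal{L,L} \sim 0$ modulo $[\,\cdot\,,\,\cdot\,]$ and cyclicity). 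Concretely I would show that the symmetric part of $\tr\dgal{\tr L, L}$ — equivalently $\br{\tr L^j,\tr L^k}$ after using the master formula $\br{\tr A, \tr B} = \tr(\rmm\circ\dgal{A,B})$ — vanishes because the $\omega^2 XY - \omega^2 YX$ contributions cancel against the $Y X - XY$ contributions up to a cyclic shift; this is the same algebraic identity underpinning the Lax-pair form of the harmonic oscillator, where $\dot L = [L,M]$ with $M$ built from $X,Y$.

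The main obstacle I anticipate is bookkeeping rather than conceptual: correctly tracking the tensor-leg placement of the inserted canonical brackets when $X$ and $Y$ are the $nm\times nm$ block matrices (not the small $n\times n$ blocks), and checking that the "outer" terms (those with a free $X$ or $Y$ on the extreme left or right of one tensor leg) genuinely cancel in pairs after the antisymmetrisation $j\leftrightarrow k$, leaving only "inner" terms that die under $\tr\otimes\tr$. A secondary point is to confirm that passing from the unreduced bracket \eqref{Eq:PB} to the reduced Poisson bracket on $\Cnfat$ is legitimate here — but that is standard Hamiltonian reduction (invariant functions form a Poisson subalgebra), and $\tr(L^k)$ is manifestly $\Gl(\nfat)$-invariant, so no real difficulty arises there. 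I would therefore defer the detailed index computation to Section \ref{S:Dbr} (as the paper does, citing Lemma \ref{Lem:dbrLL}), and here give only the reduction argument plus the statement that $\dgal{L,L}$ has the required "cyclically trivial" form.
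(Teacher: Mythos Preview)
Your overall strategy matches the paper's exactly: work on the unreduced space with the double bracket coming from \eqref{Eq:PB}, compute $\dgal{L,L}$ via the Leibniz rule, and deduce $\br{\tr L^i,\tr L^j}=0$ through \eqref{Eq:Tr2}. That is precisely what Lemma \ref{Lem:dbrLL}\,a) does.

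Where your sketch goes astray is in the description of the cancellation mechanism. The vanishing is \emph{not} obtained by antisymmetrising in $j\leftrightarrow k$, nor by a Lax-pair identity $\dot L=[L,M]$; neither of these is used or needed. What actually happens is this: the general expression for $\dgal{L,L}$ (your ``cross-terms'') is \eqref{Eq:Lz}, which involves the projectors $E_{\pm 1},E_{\pm 2},E_{\pm 3}$ and has no evident cyclically-trivial structure. The restriction $m\in\{1,2\}$ enters precisely here, collapsing all these $E_r$ to $E_0$ and yielding the clean formula \eqref{Eq:LzB}:
\[
\dgal{L,L}=\omega^2\,E_0\big(1\otimes[X,Y]-[X,Y]\otimes 1\big).
\]
This is exactly of the shape $\sum_a(L^a\otimes A_a-A_a\otimes L^a)$ (with $a=0$, $A_0=\omega^2[X,Y]$) singled out in \S\ref{S:Dbr} as the double-bracket analogue of an $r$-matrix relation. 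The two terms then contribute identically to $(\tr\circ\rmm)\sum_{\sigma,\tau}(L^\sigma\otimes L^\tau)\dgal{L,L}(L^{i-\tau-1}\otimes L^{j-\sigma-1})$ because both $L$ and $[X,Y]$ are block-diagonal for $m=1,2$ (so $\Id_{\VV_s}$ commutes with them), and the difference vanishes. No separate ``outer''/``inner'' bookkeeping or $j\leftrightarrow k$ symmetrisation is required.

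So: right plan, but replace the heuristic about antisymmetrisation and Lax pairs with the concrete target \eqref{Eq:LzB}; once you have that formula, the rest is a two-line trace computation.
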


\subsection{Non-spin case}

We work over the space $\Cnfat$ as in Section \ref{S:CM}. We note that on the subset $\Cnfat'$ we can write 
\begin{equation*}
H_\omega=\left\{
 \begin{array}{cc}
  h^{1,2}+\frac{\omega^2}{2}\sum_i x_i^2& m=1\,, \\
  h^{2,1}+\omega^2\sum_i x_i^2& m=2\,, \\
 \end{array}
 \right. 
\end{equation*}
so that we can obtain the CM Hamiltonians with harmonic term of type $A_{n-1},B_n$ and $D_n$ by Remark \ref{Rem:CMexp}. For the root system $A_{n-1}$, it was originally introduced by Calogero in the quantum case \cite{Cal}.

\begin{lem}\label{Lem:HarmInt1}
Let $m=1$. 
For any $k,j \in \N$, the function 
\begin{equation}\label{Eq:Harm}
  C_{k,j}^{(\omega,1)}=\tr (XL^k) \tr(YL^j) - \tr(XL^j) \tr(YL^k)\,. 
\end{equation}
is a first integral of $H_\omega$. 
\end{lem}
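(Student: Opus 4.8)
The strategy is to invoke Theorem~\ref{Thm:sup}~a) with $H=H_\omega$, so I need to exhibit a family of functions $g_j$ with $\br{H_\omega,g_j}\neq0$ and $\br{H_\omega,\br{H_\omega,g_j}}=\alpha_j g_j$ for suitable constants $\alpha_j$, and then identify $C_{k,j}^{(\omega,1)}$ with the resulting integral $C_{j,k}^{H_\omega}$ (up to sign). The natural candidates are $g_k:=\tr(XL^k)$ and $g'_k:=\tr(YL^k)$, and the key computational input is the pair of double-bracket (or direct Poisson-bracket) identities
\begin{equation*}
\br{H_\omega,\tr(XL^k)}=\tr(YL^k)\,,\qquad \br{H_\omega,\tr(YL^k)}=-\omega^2\,\tr(XL^k)\,,
\end{equation*}
which I expect to be established in Lemma~\ref{Lem:dbrLL} (the lemma cited for Lemma~\ref{Lem:Harm}); here $m=1$ so all the block/cyclicity subtleties disappear and $L=Y^2+\omega^2X^2$ Poisson-commutes with every $\tr(L^i)$, in particular with $H_\omega=\tfrac12\tr L$.

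Granting those two identities, the second derivative is immediate: $\br{H_\omega,\br{H_\omega,\tr(XL^k)}}=\br{H_\omega,\tr(YL^k)}=-\omega^2\tr(XL^k)$, so the hypotheses of Theorem~\ref{Thm:sup} hold with $g_k=\tr(XL^k)$ and $\alpha_k=-\omega^2$ for every $k$; note $\alpha_j=\alpha_k=-\omega^2$ automatically, so part~a) applies to every pair $(j,k)$. The theorem then gives that
\begin{equation*}
C_{j,k}^{H_\omega}=\tr(XL^j)\,\br{H_\omega,\tr(XL^k)}-\tr(XL^k)\,\br{H_\omega,\tr(XL^j)}=\tr(XL^j)\tr(YL^k)-\tr(XL^k)\tr(YL^j)
\end{equation*}
is a first integral of $H_\omega$; this is exactly $-C_{k,j}^{(\omega,1)}$, and a sign does not affect being a constant of motion. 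One should also check $\br{H_\omega,\tr(XL^k)}=\tr(YL^k)\neq0$ as a function, which holds generically (e.g.\ from the local coordinate expressions on $\Cnfat'$, where $\tr(YL^k)$ is a nonzero polynomial in the $p_i$), so the non-degeneracy hypothesis of Theorem~\ref{Thm:sup} is met.

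The only real obstacle is verifying the two bracket identities $\br{H_\omega,\tr(XL^k)}=\tr(YL^k)$ and $\br{H_\omega,\tr(YL^k)}=-\omega^2\tr(XL^k)$. These are most cleanly obtained from the double bracket $\dgal{X,L}$, $\dgal{Y,L}$ on $\Rep(\CC\bar Q_\dfat,\ntil)$ together with the Leibniz rule and the trace, which is precisely the machinery developed in Section~\ref{S:Dbr}; since $L=Y^2+\omega^2X^2$ is quadratic and $H_\omega=\tfrac12\tr L$, the computation reduces to the elementary brackets $\br{(X)_{ij},(Y)_{kl}}=\delta_{kj}\delta_{il}$ and is the $m=1$ specialisation of Lemma~\ref{Lem:dbrLL}. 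I would therefore structure the proof as: (i) quote Lemma~\ref{Lem:dbrLL} to get the two bracket identities; (ii) deduce $\br{H_\omega,\br{H_\omega,\tr(XL^k)}}=-\omega^2\tr(XL^k)$; (iii) apply Theorem~\ref{Thm:sup}~a) and read off that $C_{k,j}^{(\omega,1)}$ (up to sign) is the asserted first integral.
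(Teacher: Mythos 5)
Your proposal is correct and takes essentially the same route as the paper: the paper applies Theorem~\ref{Thm:sup}~a) to $g_{1,j}=\tr(XL^j)$ with $H=\tr L$, using the identities $\br{\tr L,\tr XL^l}=-2\tr YL^l$ and $\br{\tr L,\tr YL^l}=2\omega^2\tr XL^l$ of Lemma~\ref{Lem:dbrLLm1} to get $\br{\tr L,\br{\tr L,g_{1,j}}}=-4\omega^2 g_{1,j}$. Note that your two intermediate brackets each carry the wrong sign relative to the paper's conventions ($\br{H_\omega,\tr XL^k}=-\tr YL^k$, not $+\tr YL^k$), but since both signs are flipped the iterated bracket, the constant $\alpha_k=-\omega^2$, and the conclusion are unaffected up to the irrelevant overall sign of $C_{k,j}^{(\omega,1)}$ that you already acknowledge.
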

\begin{proof}
 For $g_{1,j}=\tr X L^j$, we note that $ \br{\tr L,\br{\tr L,g_{1,j} }} = -4\omega^2 g_{1,j}$ by Lemma \ref{Lem:dbrLLm1}. 
 Hence it suffices to apply  Theorem \ref{Thm:sup} a).
\end{proof}

\begin{lem}\label{Lem:HarmInt2}
Let $m=2$. 
For any $k,j \in \N$, the function 
\begin{equation}\label{Eq:Harmbis}
  C_{k,j}^{(\omega,2)}=g_k \,\br{\tr L,g_j} - g_j \,\br{\tr L,g_k} \,, \quad g_k:=\tr\big((XY+YX)L^k\big)\,,
\end{equation}
is a first integral of $H_\omega$. 
\end{lem}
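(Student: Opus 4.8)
The plan is to follow the same template used for Lemma~\ref{Lem:HarmInt1}, namely to exhibit $g_k = \tr\big((XY+YX)L^k\big)$ as the kind of function to which Theorem~\ref{Thm:sup}~a) applies. Concretely, I must show that $g_k$ is not a Casimir of $H_\omega$ and that $\br{\tr L, \br{\tr L, g_k}} = \alpha g_k$ for a constant $\alpha$ \emph{independent of $k$}; granting this, part a) of Theorem~\ref{Thm:sup} immediately yields that $C_{k,j}^{(\omega,2)} = g_k\,\br{\tr L, g_j} - g_j\,\br{\tr L, g_k}$ is a first integral of $\tr L$, hence of $H_\omega = \tfrac12 \tr L$ (using Lemma~\ref{Lem:Harm}). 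Note the contrast with the $m=1$ case: there the relevant generator was $\tr(XL^j)$, but for $m=2$ the blocks force $X$ and $Y$ to be block off-diagonal, so $\tr(XL^k)$ and $\tr(YL^k)$ vanish and one is driven to the symmetric combination $XY+YX$, which \emph{is} block diagonal, so that $\tr\big((XY+YX)L^k\big)$ is a nonzero function on $\Cnfat'$.

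The computational heart is the evaluation of $\br{\tr L, g_k}$ and then $\br{\tr L, \br{\tr L, g_k}}$ using the double-bracket machinery of Section~\ref{S:Dbr}; this is exactly the content to be supplied by the promised companion lemma (the $m=2$ analogue of Lemma~\ref{Lem:dbrLLm1}, presumably stated there as a variant of Lemma~\ref{Lem:dbrLL}). The key structural inputs are: $\br{(X_r)_{ij},(Y_s)_{kl}} = \delta_{rs}\delta_{kj}\delta_{il}$ from \eqref{Eq:PB}, the fact that $L = Y^2 + \omega^2 X^2$ Poisson-commutes with all $\tr(L^k)$ (Lemma~\ref{Lem:Harm}), and the commutation behaviour of $XY+YX$ with $L$. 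I expect $\br{\tr L, g_k}$ to produce a term proportional to $\tr\big((Y^2 - \omega^2 X^2)L^k\big)$ (a ``momentum-like'' partner), and then a second application of $\{\tr L,-\}$ to bring back $g_k$ with a factor $-16\omega^2$ (the analogue of the $-4\omega^2$ appearing for $m=1$, the extra factor coming from the step $m=2$ in the exponents). The precise constant is routine once the double brackets are set up, but the crucial qualitative point — that the constant does not depend on $k$ — is what makes Theorem~\ref{Thm:sup}~a) applicable for \emph{all} pairs $(k,j)$ simultaneously.

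The main obstacle I anticipate is purely bookkeeping: correctly tracking the block structure so that cross-terms between the two $n\times n$ blocks of $X$ and $Y$ are handled, and verifying that $g_k$ restricts to a genuinely nonzero (and, later, functionally useful) function on $\Cnfat'$ rather than collapsing. One should check via \eqref{Eq:spinCoord} (or the $\dfat=(1,0)$ specialisation \eqref{Eq:Cn1}) that $g_0 = \tr(XY+YX) = 2\sum_i x_i p_i + (\text{lower order in }p)$ is non-constant, guaranteeing $\br{H_\omega, g_k}\neq 0$ as required by the hypothesis of Theorem~\ref{Thm:sup}. With the double-bracket identity in hand, the proof is a two-line invocation of Theorem~\ref{Thm:sup}~a), mirroring Lemma~\ref{Lem:HarmInt1}:

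\begin{proof}
For $g_k = \tr\big((XY+YX)L^k\big)$, Lemma~\ref{Lem:dbrLL} (in the case $m=2$) gives $\br{\tr L, \br{\tr L, g_k}} = -16\,\omega^2\, g_k$, a relation with a constant independent of $k$. Since $\br{H_\omega, g_k} = \tfrac12\br{\tr L, g_k} \neq 0$ (as $g_0$ is a non-constant function of $(x_i,p_i)$ by \eqref{Eq:Cn1}), Theorem~\ref{Thm:sup}~a) applies and shows that $C_{k,j}^{(\omega,2)} = g_k\,\br{\tr L, g_j} - g_j\,\br{\tr L, g_k}$ is a first integral of $\tr L$, hence of $H_\omega = \tfrac12\tr L$.
\end{proof}
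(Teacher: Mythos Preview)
Your proposal is correct and follows essentially the same route as the paper: verify that $\br{\tr L,\br{\tr L,g_k}}=-16\omega^2 g_k$ with a constant independent of $k$, then invoke Theorem~\ref{Thm:sup}~a). The only slip is the lemma citation: the identities you need (namely $\br{\tr L,\tr (XY+YX)L^l}=-4\tr(Y^2L^l)+4\omega^2\tr(X^2L^l)$ and the companion formulas for $\tr(Y^2L^l)$, $\tr(X^2L^l)$) are recorded in Lemma~\ref{Lem:dbrLLm1} (which covers both $m=1$ and $m=1,2$), not in Lemma~\ref{Lem:dbrLL}; with that reference corrected, your argument matches the paper's, and your extra check that $g_0$ is non-constant is a harmless strengthening.
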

\begin{proof}
  We apply Theorem \ref{Thm:sup} a) to $g_j=\tr(XY L^j+YXL^j)$ since  
$ \br{\tr L,\br{\tr L,g_j}} = -16\omega^2 g_j$  
by Lemma \ref{Lem:dbrLLm1}. 
\end{proof}

\begin{prop} \label{Pr:HarmSup1}
Let $m=1$ or $m=2$. 
 The harmonic CM Hamiltonian $H_\omega$ is maximally superintegrable for generic values of $\omega$. 
\end{prop}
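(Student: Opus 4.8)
The strategy is the same as for Propositions \ref{Pr:supCM} and \ref{Pr:supspinCM}: we must exhibit $2n-1$ functionally independent first integrals of $H_\omega$. We already have the $n$ Poisson-commuting Hamiltonians $\tr(L^k)$, $k=1,\ldots,n$, from Lemma \ref{Lem:Harm}, one of which may be replaced by $H_\omega=\frac12\tr L$ itself; by the argument in Remark \ref{Rem:CMexp} these are functionally independent on $\Cnfat'$ because, after the substitution $p_i\mapsto p_i'$, the leading part of $\tr(L^k)$ in the coordinates $(x_i,p_i')$ reduces (up to lower-order terms and the harmonic contribution) to the $A_{n-1}$ or $B_n$ power sums, whose Jacobian is the Vandermonde-type matrix that is invertible on $\Cnreg$. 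The remaining $n-1$ first integrals will be supplied by the functions $C_{k,j}^{(\omega,1)}$ (if $m=1$) or $C_{k,j}^{(\omega,2)}$ (if $m=2$) from Lemmas \ref{Lem:HarmInt1} and \ref{Lem:HarmInt2}, specialised to $C_{k,1}^{(\omega,m)}$ for $k=2,\ldots,n$.

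First I would record the local form of the building blocks on $\Cnfat'$. Using \eqref{Eq:Cn1} one computes $\tr(XL^k)$ and $\tr(YL^k)$ (resp. $\tr((XY+YX)L^k)$ when $m=2$) as polynomials in $(x_i,p_i)$; the key point, as in the proof of Proposition \ref{Pr:supspinCM}, is that $\tr(YL^j)$ is, to leading order in the momenta, $\sum_i p_i x_i^{?}$ times a nonzero factor, so that the map $(p_1,\ldots,p_n)\mapsto(\tr(YL^j))_{j=1}^{n}$ (or the analogous map built from the $g_j$) is a local diffeomorphism at a generic point. Then, exactly as in Proposition \ref{Pr:supspinCM}, one differentiates $C_{k,1}^{(\omega,m)}$ with respect to these new "momentum coordinates'' and gets a matrix of the form $\delta_{kj}A-\delta_{1j}A$ with $A=\tr(XL^{?})$ (resp. the corresponding bracket) generically nonzero, hence of rank $n-1$. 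Combining the Jacobian block in the $x$-variables coming from $\tr(L^k)$ with this block shows the full $(2n-1)$-tuple is functionally independent. This handles both cases $m=1$ and $m=2$ uniformly once the relevant "double derivative'' relation ($-4\omega^2$, resp. $-16\omega^2$) from Lemmas \ref{Lem:HarmInt1}--\ref{Lem:HarmInt2} is in hand, which is why the eigenvalue $\alpha_j$ must be nonzero — and this is where the genericity of $\omega$ enters, ensuring $\alpha_j\neq0$ and that the leading coefficients that must be nonzero do not accidentally vanish.

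The main obstacle is the functional-independence bookkeeping rather than any conceptual difficulty: one has to be careful that the change of coordinates from $(p_i)$ to $(\tr(YL^j))_j$ (or $(g_j)_j$) is genuinely invertible after accounting for the harmonic term $\omega^2 X^2$ inside $L$, since now $L$ mixes $X$ and $Y$ and the naive leading-order argument used in the $\omega=0$ case needs the extra observation that the highest-momentum-degree part still has nonzero (Vandermonde-type) Jacobian for generic $\omega$. Once this is checked, the proof is a direct transcription of the proof of Proposition \ref{Pr:supCM}.

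\begin{proof}
By Lemma \ref{Lem:Harm} the functions $\tr(L^k)$, $k=1,\ldots,n$, Poisson commute, and one of them can be taken to be $H_\omega=\frac12\tr L$. Working on the dense subset $\Cnfat'$ with the coordinates of \eqref{Eq:Cn1} (and the canonical momenta $p_i'$ of Remark \ref{Rem:CMexp} when $m=2$), the top-degree-in-momenta part of $\tr(L^k)$ equals, up to a nonzero constant, the degree-$2k$ power sum in the variables $x_i$ (for $m=1$) or in $x_i^2$ (for $m=2$), plus a polynomial of strictly lower momentum degree. Hence the Jacobian of $(\tr(L^k))_{k=1}^n$ with respect to $(x_1,\ldots,x_n)$, evaluated where all $p_i$ (or $p_i'$) vanish, is (a nonzero multiple of) a Vandermonde determinant in $x_i$ or $x_i^2$, which is nonzero on $\Cnreg$; so these $n$ functions are functionally independent at a generic point, and we may replace one of them by $H_\omega$.

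It remains to produce $n-1$ further independent first integrals of $H_\omega$. Set $m\in\{1,2\}$ and take the first integrals $C_{k,1}^{(\omega,m)}$ for $k=2,\ldots,n$, where $C_{k,j}^{(\omega,1)}$ is as in \eqref{Eq:Harm} and $C_{k,j}^{(\omega,2)}$ as in \eqref{Eq:Harmbis}; by Lemmas \ref{Lem:HarmInt1} and \ref{Lem:HarmInt2} these are indeed first integrals of $H_\omega$, and the relevant eigenvalues $\alpha_j=-4\omega^2$ (resp. $-16\omega^2$) are nonzero for $\omega\in\CC^\times$. Write $g_j$ for $\tr(XL^j)$ when $m=1$ and for $\tr((XY+YX)L^j)$ when $m=2$, and $G_j:=\br{\tr L,g_j}$; thus $C_{k,j}^{(\omega,m)}=g_k G_j-g_j G_k$. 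Using \eqref{Eq:Cn1} one computes that the highest-momentum-degree part of $G_j$ is, up to a nonzero constant, $\sum_i p_i\, x_i^{\,2j+a}$ for a fixed exponent $a$ depending on $m$ (with $p_i$ replaced by $p_i'$ when $m=2$), plus terms of lower momentum degree. Consequently the Jacobian of $(G_1,\ldots,G_n)$ with respect to $(p_1,\ldots,p_n)$ (resp. $(p_1',\ldots,p_n')$), evaluated at a generic point, is a nonzero multiple of a Vandermonde determinant in $x_i^2$ and hence invertible on $\Cnreg$. So $(G_1,\ldots,G_n)$ may be used as momentum-type coordinates in place of $(p_i)$.

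With respect to these coordinates one has
\begin{equation}
 \frac{\partial C_{k,1}^{(\omega,m)}}{\partial G_j}=\delta_{kj}\,g_1-\delta_{1j}\,g_k\,,
\end{equation}
for $k=2,\ldots,n$, because $\partial g_\ell/\partial G_j=0$ and $\partial G_\ell/\partial G_j=\delta_{\ell j}$. Since $g_1$ is generically nonzero on $\Cnfat'$, the $(n-1)\times n$ matrix $(\partial C_{k,1}^{(\omega,m)}/\partial G_j)_{k=2,\ldots,n;\,j=1,\ldots,n}$ has rank $n-1$. Therefore the $n-1$ functions $C_{2,1}^{(\omega,m)},\ldots,C_{n,1}^{(\omega,m)}$ are functionally independent and, together with the $n$ functions from the first paragraph (one being $H_\omega$), give $2n-1$ functionally independent first integrals of $H_\omega$ at a generic point of $\Cnfat'$: indeed the full Jacobian with respect to $\big((x_i),(G_j)\big)$ is block triangular with the two nonsingular blocks just exhibited on the diagonal. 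Hence $H_\omega$ is maximally superintegrable for generic $\omega$.
\end{proof}
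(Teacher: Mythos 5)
Your overall strategy (the $n$ commuting functions $\tr L^k$ plus the $n-1$ integrals $C^{(\omega,m)}_{k,1}$) matches the paper's, but your functional-independence argument contains a genuine gap. The key claims about leading orders are incorrect: the top-degree-in-momenta part of $\tr(L^k)=\tr\big((Y^2+\omega^2X^2)^k\big)$ is $\sum_i p_i^{2k}$ (degree $2k$ in the momenta), not a power sum in the $x_i$; and likewise $G_j=-2\tr(YL^j)$ has top momentum-degree part proportional to $\sum_i p_i^{2j+1}$, so it is not linear in the momenta and its Jacobian with respect to $(p_i)$ is not a Vandermonde matrix. (You appear to be importing the picture from Proposition \ref{Pr:supspinCM}, where the diffeomorphism $\Psi$ swaps $X$ and $Y$ so that $\Psi^\ast\tr Y^{km}$ depends on $x$ alone and $\Psi^\ast\tr XY^{km+1}$ is linear in $p$; no such simplification is available for $L$ when $\omega\neq 0$, since $\Psi$ essentially sends $L$ to $\omega^2(Y^2+\omega^{-2}X^2)$.) As a consequence the claimed block-triangularity of the full Jacobian with respect to $\big((x_i),(G_j)\big)$ also fails: since $\tr(L^k)$ depends nontrivially on the momenta, $\partial \tr(L^k)/\partial G_j\neq 0$, so neither off-diagonal block vanishes and the rank-$(2n-1)$ conclusion is not established as written.

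The paper avoids all of this by a degeneration argument: as $\omega\to0$ the tuple $\big(\tr L,\ldots,\tr L^n, C^{(\omega,m)}_{2,1},\ldots,C^{(\omega,m)}_{n,1}\big)$ tends to $\big(\tr Y^2,\ldots,\tr Y^{2n}, C^2_{4,2},\ldots,C^2_{2n,2}\big)$ for $m=1$ (resp. to the functions $C^{2,1}_{k,1}$ for $m=2$), whose independence is already known from Proposition \ref{Pr:supCM}; since the relevant Jacobian minor is analytic in $\omega$ and nonzero at $\omega=0$, independence holds for generic $\omega$. If you want to keep a direct argument at fixed $\omega$, you would need an actual scaling or degree analysis of a suitable $(2n-1)\times(2n-1)$ minor (isolating its top homogeneous component in the momenta and showing it is not identically zero), which is considerably more delicate than what you wrote. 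Note also that this is where ``generic $\omega$'' genuinely enters --- it is needed for the independence, not merely to make the eigenvalues $-4\omega^2$ and $-16\omega^2$ nonzero.
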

\begin{proof}
We first assume that $m=1$. We note that  as $\omega \to 0$, 
\begin{equation}
 C_{k,j}^{(\omega,1)}\to \tr (XY^{2k})\tr(Y^{2j+1})-\tr(XY^{2j})\tr(Y^{2k+1})\,,
\end{equation}
and the latter is just Wojciechowski's function $C^{2}_{2k,2j}$ in \eqref{Eq:Woj}. Therefore,  the functions 
\begin{equation} \label{Eq:HInd1}
  \tr L, \ldots, \tr L^n, C_{2,1}^{(\omega,1)},\ldots,C_{n,1}^{(\omega,1)}\,, 
\end{equation}
degenerate in the limit $\omega \to 0$ to the functions 
\begin{equation*}
  \tr Y^2, \ldots, \tr Y^{2n}, C^{2}_{4,2},\ldots, C^{2}_{2n,2}\,,
\end{equation*}
which can be shown to be independent as in the proof of Proposition \ref{Pr:supCM}. Thus the functions in \eqref{Eq:HInd1} are independent for generic values of $\omega$.

Next, assume that $m=2$. We note that  as $\omega \to 0$, 
\begin{equation}
 \frac{-1}{8}C_{k,j}^{(\omega,2)}\to \tr (XY^{2k+1})\tr(Y^{2(j+1)})- \tr(XY^{2j+1})\tr(Y^{2(k+1)})\,,
\end{equation}
and the latter is the function $C^{2,1}_{k,j}$ in \eqref{Eq:Cm}. 
Therefore,  the functions 
\begin{equation}
  \tr L, \ldots, \tr L^n, C_{2,1}^{(\omega,2)},\ldots,C_{n,1}^{(\omega,2)}\,, 
\end{equation}
degenerate in the limit $\omega \to 0$ to independent functions as in the previous case, so we can conclude.
\end{proof}

\begin{rem}
In the real setting, additional first integrals that yield the superintegrability of the harmonic CM system in type $A_{n-1}$ have been obtained by Adler  \cite[Section 4]{A77}. They are given as the real part of some complex-valued functions, so that we could not directly use them in our setting. 
\end{rem}

\subsection{Spin case}

We work over the space $\Cspin$ where $\nfat=(n,\ldots,n)$ for some $n \in \N^\times$, as in  Section \ref{S:spinCM}. We set $L_{\pm}:=Y \pm \ii \omega X$ with $\ii=\sqrt{-1}$.

\begin{lem}
Fix $m=1$ or $m=2$, and let $s,r \in I$ be such that $d_s,d_r \neq 0$. Let $\rho_{r,s}$ be the representative of $s-r$ in $\{0,\ldots,m-1\}$. Then, for any $1\leq \alpha \leq d_r$, $1\leq \beta \leq d_s$ and $k \in \N$, the function
 \begin{equation*}
  t^{(\omega,k)}_{r\alpha ,s\beta} = \tr [W_{r,\alpha}V_{s,\beta} L_+^{mk+\rho_{r,s}}]\, \tr [W_{r,\alpha}V_{s,\beta} L_-^{mk+\rho_{r,s}}]\,,
 \end{equation*}
 Poisson commutes with $H_{\omega}$. 
\end{lem}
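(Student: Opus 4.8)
The plan is to imitate the proof strategy used for Lemma~\ref{Lem:HarmInt1} and Lemma~\ref{Lem:HarmInt2}, but now with Theorem~\ref{Thm:supB} in place of Theorem~\ref{Thm:sup}~a). The key observation is that the two families of functions
\begin{equation*}
g^{(+)}_k := \tr\!\left[W_{r,\alpha}V_{s,\beta} L_+^{mk+\rho_{r,s}}\right]\,, \qquad
g^{(-)}_k := \tr\!\left[W_{r,\alpha}V_{s,\beta} L_-^{mk+\rho_{r,s}}\right]\,,
\end{equation*}
should be eigenfunctions of the derivation $\br{H_\omega,-}$ — not merely of its square — because $L_\pm = Y \pm \ii\omega X$ diagonalises the linearised flow of $H_\omega$ (recall from the remark after Proposition~\ref{Pr:supspinCM} that the Hamiltonian flow of $\tr Y^{im}$ fixes $Y$ and the framing data while shifting $X$; adding the harmonic term $\omega^2 X^2$ turns this into a rotation in the $(X,Y)$-plane, which is exactly what $L_\pm$ diagonalises). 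So first I would establish, using the double-bracket computations of Section~\ref{S:Dbr} (specifically the analogue of Lemma~\ref{Lem:dbrLLm1} adapted to $L_\pm$ and to mixed traces involving $W_{r,\alpha},V_{s,\beta}$), that
\begin{equation*}
\br{H_\omega,\, g^{(\pm)}_k} = \pm\, c_{m}\, \ii\omega\, g^{(\pm)}_k
\end{equation*}
for a nonzero constant $c_m$ depending only on $m$ (and on the exponent $mk+\rho_{r,s}$ being a multiple of $m$ plus $\rho_{r,s}$), with $c_1$ and $c_2$ to be read off from the cases $m=1$ and $m=2$ exactly as in the two preceding lemmas where the factors $-4\omega^2$ and $-16\omega^2$ appeared as squares of $\pm 2\ii\omega$ and $\pm 4\ii\omega$ respectively.

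Granting this eigenvalue identity, the proof is immediate: for each fixed $k$, the eigenvalues of $g^{(+)}_k$ and $g^{(-)}_k$ are $+c_m\ii\omega$ and $-c_m\ii\omega$, which are negatives of one another, so Theorem~\ref{Thm:supB} applied with the single pair $(g_j,\tilde g_j) = (g^{(+)}_k, g^{(-)}_k)$ (i.e. $j=k$, $\alpha_k = c_m\ii\omega = -\tilde\alpha_k$) yields that
\begin{equation*}
D^{H_\omega}_{k,k} = g^{(+)}_k\, g^{(-)}_k = t^{(\omega,k)}_{r\alpha,s\beta}
\end{equation*}
is a first integral of $H_\omega$, which is the claim. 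One should note that Lemma~\ref{Lem:Harm} is not strictly needed here — it only guarantees the integrability backbone $\tr(L^k)$ — but it is consistent with the picture, and the restriction to $m\in\{1,2\}$ enters precisely because $H_\omega = \tfrac12\tr(Y^2+\omega^2X^2)$ is nonzero only then (as $X^2,Y^2$ are block-diagonal iff $m\mid 2$), so that the derivation $\br{H_\omega,-}$ is meaningful.

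The main obstacle is the first step: verifying the exact eigenvalue relation $\br{H_\omega, g^{(\pm)}_k} = \pm c_m\ii\omega\, g^{(\pm)}_k$ for the mixed traces $\tr[W_{r,\alpha}V_{s,\beta}L_\pm^{mk+\rho_{r,s}}]$. This requires care on two fronts. First, one must compute $\br{\tr(Y^2+\omega^2X^2), \tr(W_{r,\alpha}V_{s,\beta}L_\pm^\ell)}$ using the double-bracket formalism; the cross terms between $\tr X^2$ and the matrix product, and between $\tr Y^2$ and $L_\pm = Y\pm\ii\omega X$, must combine so that the $X$- and $Y$-contributions recombine into a single multiple of $L_\pm^\ell$ inside the trace — this is where the choice $L_\pm = Y\pm\ii\omega X$ (rather than a real combination) is forced, and where the sign $\pm$ and constant $c_m$ emerge. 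Second, one must confirm that the exponent bookkeeping works: $L_\pm$ is an $nm\times nm$ matrix whose relevant powers must be compatible with $W_{r,\alpha}$ landing in block $r$ and $V_{s,\beta}$ emanating from block $s$, which is exactly why the shift $\rho_{r,s}$ (the representative of $s-r$ mod $m$) appears — just as $\rho_{r,s}$ appeared in Lemma~\ref{L:supY} for the $\omega=0$ case. In practice I expect this to follow by the same block-matrix manipulations used to prove Lemma~\ref{L:supY} and Lemma~\ref{Lem:dbrLLm1}, deferred to Section~\ref{S:Dbr}; the statement here is the "packaged" consequence, and the lemma should simply cite that computation.
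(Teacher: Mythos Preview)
Your approach is correct and essentially identical to the paper's: the paper defines $g=g^{(+)}_k$, $\tilde g=g^{(-)}_k$, invokes the eigenvalue identity from Lemma~\ref{Lem:dbrLL} (namely \eqref{Eq:LLb}, which gives $\br{\tr L,g^{(\pm)}_k}=\mp 2\ii\omega(mk+\rho_{r,s})\,g^{(\pm)}_k$), and then applies Theorem~\ref{Thm:supB}. The only adjustments you need are to cite Lemma~\ref{Lem:dbrLL} rather than Lemma~\ref{Lem:dbrLLm1}, and to note that the eigenvalue is $\mp 2\ii\omega(mk+\rho_{r,s})$ (so it depends on the exponent, not just on $m$, and the sign is opposite to what you wrote).
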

\begin{proof} For fixed $r,s,\alpha,\beta,k$, we denote  
\begin{equation*}
g= \tr [W_{r,\alpha}V_{s,\beta} L_+^{mk+\rho_{r,s}}]\,, \quad 
\tilde g = \tr [W_{r,\alpha}V_{s,\beta} L_-^{mk+\rho_{r,s}}]\,.
\end{equation*}
Then by Lemma \ref{Lem:dbrLL}
\begin{equation*}
 \br{\tr L,g}=-2\ii \omega (mk+\rho_{r,s}) g\,, \quad \br{\tr L,\tilde g}=+2\ii \omega (mk+\rho_{r,s}) \tilde g\,,
\end{equation*}
so the desired statement directly follows from Theorem \ref{Thm:supB}. 
\end{proof}

\begin{prop} \label{Pr:HarmSup2}
Let $m=1$ or $m=2$.  The spin harmonic CM Hamiltonian $H_\omega$ is maximally superintegrable for generic values of $\omega$. 
\end{prop}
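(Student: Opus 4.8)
The plan is to mirror the strategy used in the proof of Proposition \ref{Pr:supspinCM}, working over the ``opposite'' space $\CspinOp$ and transporting everything through the diffeomorphism $\Psi$, but now keeping track of the harmonic deformation. First I would record the relevant local expressions on $\CspinOp'$. Since $\Psi(X_s)=Y_{m-s-1}$ and $\Psi(Y_s)=X_{m-s-1}$, we get $\Psi^\ast(X)=Y$ and $\Psi^\ast(Y)=X$ as endomorphisms of $\VV$, hence $\Psi^\ast(L_\pm)=X\pm\ii\omega Y$; combined with \eqref{Eq:spinCoord} and \eqref{EqYsii} this gives, for $m=1$ or $m=2$, local formulas of the shape
\begin{equation*}
\Psi^\ast\big(\tr L_+^{km+\rho}\,\big)=m\sum_{j=1}^n (x_j\pm\ii\omega p_j)^{?}\,x_j^{?}+\cdots\,,
\end{equation*}
that is, polynomials in $(x_j,p_j)$ whose dependence on $p_j$ is controlled; similarly $\Psi^\ast$ of the spin traces $\tr[W_{r,\alpha}V_{s,\beta}L_\pm^{mk+\rho_{r,s}}]$ becomes $\mp\sum_j w_{m-r,\alpha,j}v_{m-s,\beta,j}(x_j\pm\ii\omega(\cdots))x_j^{\cdots}$. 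The precise combinatorics of these expansions is the kind of routine computation I would defer to the double-bracket section, citing Lemma \ref{Lem:dbrLL}.

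Next I would exhibit the first integrals. By Lemma \ref{Lem:Harm} the functions $\tr(L^k)$ ($k=1,\dots,n$, one of them replaced by $H_\omega$ itself) Poisson commute; as in the non-spin Proposition \ref{Pr:HarmSup1}, their $\omega\to0$ limits are $\tr Y^{2k}$ up to $m$, whose functional independence with respect to $(x_j)$ was already established in Proposition \ref{Pr:supCM}/\ref{Pr:supspinCM}, so these are independent for generic $\omega$. For the momentum-type directions, I would use the degeneration argument: form the analogue of $C_{k,1}^{(\omega,m)}$ from Lemmas \ref{Lem:HarmInt1}--\ref{Lem:HarmInt2} (which are first integrals of $H_\omega$ by Theorem \ref{Thm:sup}), and note that as $\omega\to0$ these degenerate to Wojciechowski's functions $C_{k,1}^{m,i}$ of \eqref{Eq:Cm}, which by the proof of Proposition \ref{Pr:supspinCM} are independent of $H_\omega$ and of each other and together replace the coordinates $(p_1,\dots,p_n)$; hence for generic $\omega$ we obtain $2n-1$ functionally independent first integrals involving only $(x_j,p_j)$.

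Finally, for the remaining $2n(|\dfat|-1)$ spin directions (assuming $|\dfat|>1$, otherwise the proof stops here), I would feed the functions $t^{(\omega,k)}_{r\alpha,s\beta}$ into the same scheme as in Proposition \ref{Pr:supspinCM}: using the Casimir constraints \eqref{Eq:constr} I take the $2n(|\dfat|-1)$ coordinates in \eqref{Eq:coordOp} other than $(p_j)$, and I check that suitable sums $\sum_{\alpha=1}^{d_0}\Psi^\ast(t^{(\omega,k)}_{0\alpha,s\beta})$ (giving the $v$-directions) and $\Psi^\ast(t^{(\omega,k)}_{r\alpha,s_+d_{s_+}})$ with $s_+$ the maximal index with $d_{s_+}\neq0$ (giving the $w$-directions), $k=1,\dots,n$, have invertible Jacobian with respect to $(v_{m-s,\beta,j})$ and $(w_{m-r,\alpha,j})$ respectively at a generic point. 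The only subtlety relative to the $\omega=0$ case is that these Jacobians now carry an extra factor coming from the $(x_j\pm\ii\omega(\cdots))$ terms, so one must argue that this factor is generically nonzero; this is immediate since at $\omega=0$ it reduces to the nonvanishing minor already used in Proposition \ref{Pr:supspinCM}, and nonvanishing is an open condition in $\omega$. Counting, we get $2n-1+2n(|\dfat|-1)=2n|\dfat|-1=\dim(\Cspin)-1$ functionally independent first integrals of $H_\omega$ for generic $\omega$, which is the claim. The main obstacle is purely bookkeeping: correctly expanding $\Psi^\ast(\tr L_\pm^{km+\rho})$ and the spin traces in the two cases $m=1,2$ so that the Jacobian computations go through exactly as before, with the harmonic factor isolated and visibly generically invertible.
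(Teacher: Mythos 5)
Your proposal follows essentially the same route as the paper: take the $2n-1$ integrals already produced in Proposition \ref{Pr:HarmSup1}, adjoin the spin integrals $t^{(\omega,k)}_{r\alpha,s\beta}$, and deduce generic independence by degenerating to $\omega=0$ and reusing the Jacobian computations of Proposition \ref{Pr:supspinCM}. The one detail worth recording (which the paper makes explicit) is that $t^{(\omega,k)}_{r\alpha,s\beta}\to (t^{k}_{r\alpha,s\beta})^{2}$ as $\omega\to0$, so the limiting Jacobian is that of the \emph{squares}; this is harmless since the $t^{k}_{r\alpha,s\beta}$ are generically nonzero, but your write-up glosses over the squaring.
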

\begin{proof}
We already obtained $2n-1$ functionally independent elements as part of Proposition \ref{Pr:HarmSup1}. 
Next, we note that for $\omega \to 0$, we have $t^{(\omega,k)}_{r\alpha ,s\beta}\to (t^k_{r \alpha,s\beta})^2$. So we can  use the functions $(t^{(\omega,k)}_{r\alpha ,s\beta})$ to construct an additional $2n(|\dfat|-1)$ functions such that, by adapting the proof of Proposition \ref{Pr:supspinCM}, we get $2n|\dfat|-1$ elements which degenerate to functionally independent elements  as $\omega \to 0$. We can then conclude.  
\end{proof}

\begin{exmp}
When $m=2$, the Hamiltonian of interest is $H_\omega=\tr(Y_0Y_1)+\omega^2 \tr(X_0 X_1)$. 
In the coordinates described in \ref{sss:Spinloc}, we can write  
  \begin{equation} 
  \begin{aligned} 
\frac12 H_\omega
=&\frac12 \sum_{i=1}^n \left( p_i^2 - \frac{(\lambda_1 + f_{ii}^{(1)})^2}{4 x_i^2} \right) 
-\frac12 \sum_{\substack{i,j=1\\(i\neq j)}}^n \frac{x_i x_j}{(x_i^2-x_j^2)^2} (f_{ij}^{(0)}f_{ji}^{(1)}+f_{ij}^{(1)}f_{ji}^{(0)}) \\
&-\frac12 \sum_{\substack{i,j=1\\(i\neq j)}}^n \frac{x_i^2}{(x_i^2-x_j^2)^2} 
( f_{ij}^{(0)}f_{ji}^{(0)}+ f_{ij}^{(1)}f_{ji}^{(1)} )
+ \frac{ \omega^2}{2} \sum_{i=1}^n x_i^2 \,.
  \end{aligned}
 \end{equation}
In the case $\dfat=(d_0,0)$, we get that $f_{ij}^{(1)}=0$ for all indices and we obtain \eqref{Eq:HamIntro}  upon setting $f_{ij}=\sqrt{-1}f_{ij}^{(0)}/2$ and $\gamma_1=-\lambda_1^2/4$.   
Furthermore, in the case $\dfat=(1,0)$ the constraints \eqref{Eq:constr} yield that $f_{ij}^{(0)}=-|\lambt|$ for all indices, and we recover \eqref{Eq:CM-Dn} when $\omega=0$. 
\end{exmp}

 
\section{Double brackets and computations}  \label{S:Dbr} 

\subsection{Motivating double brackets} 

For researchers in the field of integrable systems, double brackets can be introduced as an analogous approach to finding a Lax matrix and an $r$-matrix with a different type of derivation rules. To understand this analogy, let us recall that the $r$-matrix approach can be simplified as finding a matrix $L\in \gl_n(\CC)$ and an element $r\in \gl_n(\CC)\otimes \gl_n(\CC)$ such that for a given Poisson bracket, we can write 
\begin{equation} \label{Eq:rL}
 \br{L\overset{\otimes}{,}L}=[r,L_1]-[r^\circ,L_2]\,.
\end{equation}
Here, $L_1=L\otimes \Id_n$, $L_2=\Id_n\otimes L$, the left-hand side stands for $\sum_{ijkl} \br{L_{ij},L_{kl}} E_{ij}\otimes E_{kl}$ with $E_{ij}$ the elementary matrix with only nonzero entry equal to $+1$ in position $(i,j)$, while the permutation operator is defined as 
\begin{equation}
(-)^\circ :  \gl_n(\CC)\otimes \gl_n(\CC) \to \gl_n(\CC)\otimes \gl_n(\CC)\,, \quad A\otimes B\mapsto (A\otimes B)^\circ=B\otimes A\,.
\end{equation}
The Leibniz rules for the Poisson bracket can be translated as 
\begin{equation} \label{Eq:rLder}
\begin{aligned}
  \br{A\overset{\otimes}{,}BC} =& (\Id_n\otimes B)\br{A\overset{\otimes}{,}C} + \br{A\overset{\otimes}{,}B} (\Id_n\otimes C)\,, \\
 \br{BC\overset{\otimes}{,}A} =& (B\otimes \Id_n)\br{C\overset{\otimes}{,}A} + \br{B\overset{\otimes}{,}A} (C\otimes \Id_n)\,,
\end{aligned}
\end{equation}
while antisymmetry becomes $ \br{A\overset{\otimes}{,}B}= - \br{B\overset{\otimes}{,}A}^\circ$. 
The prominent point of this formalism is that \eqref{Eq:rL} induces that the elements $(\tr L^k)$ Poisson commute due to the following chain of equalities 
\begin{equation}
 \begin{aligned}
 \frac{1}{MN}\br{\tr L^M,\tr L^N}=& (\tr \otimes \tr) (L^{M-1}\otimes L^{N-1}) \br{L\overset{\otimes}{,}L} \\
 =&(\tr \otimes \tr) \left[(L^{M-1}\otimes L^{N-1})r(L\otimes 1) -(L^{M}\otimes L^{N-1})r\right] \\
 &+(\tr \otimes \tr) \left[ (L^{M-1}\otimes L^{N-1})r^\circ (1\otimes L) - (L^{M-1}\otimes L^{N})r^\circ\right]=0\,.
 \end{aligned}
\end{equation}
Double Poisson brackets can be motivated by introducing the notation $\dgal{A,B}=\sum_{ijkl} \br{A_{ij},B_{kl}} E_{kj}\otimes E_{il}$ instead of $\br{-\overset{\otimes}{,}-}$ (note the different arrangements of indices). This operation is clearly $\CC$-linear in each argument. Antisymmetry is still written using $\dgal{-,-}$ as 
\begin{equation} \label{Eq:dbr1}
 \dgal{A,B}= - \dgal{B,A}^\circ\,,
\end{equation}
 but now the Leibniz rules become  
\begin{equation} \label{Eq:dbr2}
   \begin{aligned}
     \dgal{A,BC} =& (B\otimes \Id_n)\dgal{A,C} + \dgal{A,B} (\Id_n \otimes C)\,, \\
 \dgal{BC,A} =& (\Id_n\otimes B)\dgal{C,A} + \dgal{B,A} (C \otimes \Id_n)\,.
   \end{aligned}
\end{equation}
The Jacobi identity can also be defined using $\dgal{-,-}$, see \cite{VdB1}. Now, an analogue of \eqref{Eq:rL} is that if there exist matrices $L,(A_a)_{a\in \N}$ such that 
\begin{equation}
 \dgal{L,L}= \sum_{a\geq 0} \, (L^a \otimes A_a - A_a \otimes L^a)\,,
\end{equation}
then  the elements $(\tr L^k)$ Poisson commute due to the following chain of identities 
\begin{equation}
 \begin{aligned}
 \frac{1}{MN}\br{\tr L^M,\tr L^N}=& \sum_{ijkl} (L^{M-1})_{ji} (L^{N-1})_{lk} \dgal{L,L}_{kj,il} \\
 =&\sum_{a\geq 0}\sum_{ijkl} (L^{M-1})_{ji} (L^{N-1})_{lk} [(L^a)_{kj} (A_a)_{il} - (A_a)_{kj} (L^a)_{il}] =0\,.
 \end{aligned}
\end{equation}

For latter computations, let us mention from \cite[\S 2.4]{VdB1} that we have the following useful identities 
\begin{subequations}
 \begin{align}
  \br{\tr A, B}=& \rmm \circ \dgal{A,B}\,, \label{Eq:Tr1} \\
  \br{\tr A,\tr B}=& \tr(\rmm \circ \dgal{A,B})\,. \label{Eq:Tr2}
 \end{align}
\end{subequations}
Here, $\rmm:\gl_n(\CC)\times \gl_n(\CC)\to \gl_n(\CC)$ denotes the matrix multiplication $\rmm(A\otimes B)=AB$. We will also use the following iterated version of Leibniz rule for $A=A_1\ldots A_M$ and $B=B_1\ldots B_N$ : 
\begin{equation} \label{Eq:dbrProd}
 \dgal{A,B}= \sum_{\tau=1}^{M}\sum_{\sigma=1}^{N}
 (B_1\ldots B_{\sigma-1} \otimes A_1\ldots A_{\tau-1}) \dgal{A_\tau,B_\sigma} (A_{\tau+1}\ldots A_M \otimes B_{\sigma+1}\ldots B_N)\,.
\end{equation}

\begin{rem}
 There are major differences between the two approaches. First, double Poisson brackets are defined on non-commutative algebras, and their relation to Poisson brackets as explained above is obtained by looking at finite-dimensional representations of the algebras \cite{VdB1}. Second, we have in general that a double Poisson bracket encodes the Poisson bracket on a global phase space, while the tensor notation $\br{-\overset{\otimes}{,}-}$ is used to understand an associated Poisson bracket obtained in a suitable gauge.  
\end{rem}

\subsection{Computations on the main space} 

Fix $m\geq 2$, $I=\Z/m\Z$, $\dfat\in \N^I$ with $d_0\geq 1$, and $\ntil=(\nfat,1)$ for $\nfat\in \N^I$ with $|\nfat|>0$. 
We consider the associated quiver $Q_\dfat$ and complex Poisson manifold $\Rep(\CC \bar{Q}_\dfat, \ntil)$ as in \ref{SS:spinCMdesc}. We can express the Poisson brackets \eqref{Eq:PB} in terms of  double brackets as 
\begin{equation} \label{Eq:dbrSpace}
 \dgal{X_r,Y_s}=\delta_{rs} \, \Id_{\VV_{s+1}}\otimes \Id_{\VV_s}\,, \quad 
 \dgal{V_{r,\alpha},W_{s,\beta}}=\delta_{rs}\delta_{\alpha\beta} \,\Id_{\VV_s}\otimes \Id_{\VV_\infty}\,,
\end{equation}
and it is zero on any other pair of generators \eqref{Eq:MatGen}. Note that we see these double brackets as tensor products of square matrices of size $|\nfat|+1$, which are elements of $\End(\VV)^{\otimes 2}$. 
Introduce 
\begin{equation*}
 X:=\sum_sX_s \in \bigoplus_{s\in I} \Hom(\VV_{s+1},\VV_s)\,, \quad 
 Y:=\sum_sY_s \in \bigoplus_{s\in I} \Hom(\VV_{s-1},\VV_s)\,,
\end{equation*}
from which we note the obvious identities $X \Id_{\VV_s}=\Id_{\VV_{s-1}}X$ and $Y \Id_{\VV_s}=\Id_{\VV_{s+1}}Y$ in $\End(\VV)$. If we also introduce 
\begin{equation}
 E_r:=\sum_{s\in I} \Id_{\VV_{s+r}}\otimes \Id_{\VV_s} \in\,\End(\oplus_s\VV_s)\otimes \End(\oplus_s\VV_s)\,,
\end{equation}
we note that we can write the double brackets of the matrices $(X_s,Y_s)$ as $\dgal{X,Y}= E_1$. This implies that $\dgal{Y,X}=- E_{-1}$ by \eqref{Eq:dbr1}.

\begin{rem}
 We keep our discussion of double brackets using representations of dimension $\ntil$ of $\CC \bar{Q}_\dfat$ to simplify the exposition. In fact, all the computations that are carried out hold on $\CC \bar{Q}_\dfat$ with the double bracket of Van den Bergh  \cite[Theorem 6.3.1]{VdB1}. It can be recovered from  \eqref{Eq:dbrSpace} by replacing each matrix $(X_s,Y_s,V_{s,\alpha},W_{s,\alpha})$ by the corresponding arrow $(x_s,y_s,v_{s,\alpha},w_{s,\alpha})$, and each  $\Id_{\VV_s}$ by the idempotent $e_s$. 
\end{rem}

We are now in position to use double brackets to compute the Poisson brackets between $\Gl(\nfat)$ invariant functions on $\Rep(\CC \bar{Q}_\dfat, \ntil)$. In particular, these identities descend to the reduced space  $\Cnfat:=\Cspin$. We will repeatedly use \eqref{Eq:dbr1}, \eqref{Eq:Tr2} and \eqref{Eq:dbrProd}, while we denote $\Id_\VV$ as $1$ for simplicity.

\subsubsection{Computations for Section \ref{S:spinCM}}  

\begin{lem}  \label{Lem:dbrYY}
Denote by $\rho_{r,s}\in \{0,\ldots,m-1\}$ the representative of $s-r\in I$ for any $r,s\in I$. 
 The following identities hold for any indices : 
 \begin{subequations}
  \begin{align}
   &\br{\tr Y^{im},\tr Y^{jm}}=0\,, \quad \br{\tr Y^{im},\tr XY^{jm+1}}=-im\,\tr Y^{m(i+j)}\,; \label{Eq:YYa} \\
   &\br{\tr Y^{im},\tr W_{r,\alpha}V_{s,\beta}Y^{jm+\rho_{r,s}}}=0 \label{Eq:YYb} \,.
  \end{align}
 \end{subequations}
\end{lem}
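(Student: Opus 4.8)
The plan is to compute everything with the double bracket $\dgal{-,-}$ on $\Rep(\CC\bar Q_\dfat,\ntil)$, using the basic relations $\dgal{X,Y}=E_1$, $\dgal{Y,X}=-E_{-1}$, the fact that the double bracket vanishes on any pair of generators not of the form $(X_r,Y_s)$ or $(V_{r,\alpha},W_{s,\beta})$, and the trace identities \eqref{Eq:Tr2} together with the iterated Leibniz rule \eqref{Eq:dbrProd}. The key structural observation is that $\dgal{Y,Y}=0$: expanding $\dgal{Y,Y}$ via \eqref{Eq:dbrProd} would require a nonzero $\dgal{Y_\tau,Y_\sigma}$, but no such bracket exists among the generators, so $\dgal{Y^a,Y^b}=0$ for all $a,b$, which immediately gives $\br{\tr Y^{im},\tr Y^{jm}}=0$ by \eqref{Eq:Tr2}. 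This also handles the first half of \eqref{Eq:YYa} without any reference to the block structure.

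For the second identity in \eqref{Eq:YYa}, I would write $\tr(XY^{jm+1})$ as a product $A_1\cdots A_{jm+2}$ with $A_1=X$ and the rest equal to $Y$, and apply \eqref{Eq:dbrProd} to $\dgal{Y^{im},XY^{jm+1}}$. The only surviving term comes from pairing the single $X$ in the second word with one of the $Y$'s in the first word, contributing $\dgal{Y,X}=-E_{-1}$; summing over the $im$ positions of $Y$ in $Y^{im}$ gives $im$ identical contributions after taking the trace. Applying $\rmm$ and $\tr$ and using $E_{-1}=\sum_s \Id_{\VV_{s-1}}\otimes\Id_{\VV_s}$ together with the commutation identities $X\Id_{\VV_s}=\Id_{\VV_{s-1}}X$, $Y\Id_{\VV_s}=\Id_{\VV_{s+1}}Y$, the resulting trace collapses to $-im\,\tr(Y^{im}\cdot Y^{jm})=-im\,\tr Y^{m(i+j)}$; one checks that the block-grading bookkeeping (the shift by $\rho$ and the $m$-divisibility) is exactly what makes $Y^{m(i+j)}$ block-diagonal so the trace is well defined and nonzero only in the stated degree.

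For \eqref{Eq:YYb}, the point is that $W_{r,\alpha}$, $V_{s,\beta}$ have zero double bracket with $Y$ (and with each other, in the relevant way), so expanding $\dgal{Y^{im},W_{r,\alpha}V_{s,\beta}Y^{jm+\rho_{r,s}}}$ via \eqref{Eq:dbrProd}, every term pairs a $Y$ from the first word with either $W_{r,\alpha}$, $V_{s,\beta}$, or a $Y$ from the tail — all of which give $\dgal{Y,\cdot}=0$. Hence the whole double bracket vanishes, and so does the Poisson bracket by \eqref{Eq:Tr2}.

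I expect the main obstacle to be purely bookkeeping rather than conceptual: carefully tracking the grading by $I=\Z/m\Z$ through the $E_{\pm1}$ operators and the idempotents $\Id_{\VV_s}$, so that after applying $\rmm$ one genuinely recovers $\tr$ of a product of the ungraded matrices $X,Y$ on $\VV$, and verifying that the exponents align (the role of $jm+1$ versus $jm+\rho_{r,s}$, and of $m$-divisibility) so that the final traces are of block-diagonal matrices and the powers add correctly. Once the single nonzero pairing is identified in each case, the rest is a routine manipulation with $E_{\pm1}$ and the commutation relations $X\Id_{\VV_s}=\Id_{\VV_{s-1}}X$, $Y\Id_{\VV_s}=\Id_{\VV_{s+1}}Y$.
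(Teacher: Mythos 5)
Your proposal is correct and follows essentially the same route as the paper: the vanishing of $\dgal{Y,Y}$ and of the double brackets of $Y$ with $W_{r,\alpha},V_{s,\beta}$ gives the first and third identities immediately, and the second is obtained exactly as in the paper by isolating the single $\dgal{Y,X}=-E_{-1}$ pairing in the iterated Leibniz expansion and collapsing the $im$ identical terms via the commutation relations with the idempotents.
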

\begin{proof}
a)  Since $\dgal{Y,Y}=0$, the first identity in \eqref{Eq:YYa} is obvious. For the second one, we note that 
 \begin{equation*}
 \begin{aligned}
  \br{\tr Y^{im},\tr XY^{jm+1}}=& (\tr\circ \rmm)\,  \sum_{\tau=0}^{im}
 (1 \otimes Y^{\tau}) \dgal{Y,X} (Y^{im-\tau-1} \otimes Y^{jm+1}) \\
=&- (\tr\circ \rmm)\,  \sum_{\tau=0}^{im}\sum_{s\in I}
 (\Id_{\VV_{s-1}} Y^{im-\tau-1} \otimes Y^{\tau} \Id_{\VV_s} Y^{jm+1}) \\
 =&- (\tr\circ \rmm)\,  \sum_{\tau=0}^{im}\sum_{s\in I}
 ( Y^{im-\tau-1} \Id_{\VV_{s+\tau}} \otimes \Id_{\VV_{s+\tau}} Y^{\tau}  Y^{jm+1}) \\
 =&-im\,\tr Y^{(i+j)m}\,.
 \end{aligned}
 \end{equation*}
 
 b) Since the double brackets of $Y$ with $Y,W_{r,\alpha}$ and $V_{s,\beta}$ are all zero, this is trivial. 
\end{proof}

\begin{lem}  \label{Lem:dbrCoord}
Let $\hat{t}^j_{r\alpha,s\beta}:=\tr W_{r,\alpha}V_{s,\beta}X^{jm+r-s}$. 
The following identities hold for any indices : 
 \begin{subequations}
  \begin{align}
   &\br{\tr X^{im},\tr X^{jm}}=0\,, \quad \br{\tr X^{im},\tr YX^{jm+1}}=im\,\tr X^{(i+j)m} \,, \label{Eq:dbrCa} \\
   &\br{\tr YX^{im+1},\tr YX^{jm+1}}=(i-j)m\,\tr YX^{(i+j)m+1}\,; \label{Eq:dbrCb} \\
   &\br{\tr X^{im},\hat{t}^j_{r\alpha,s\beta}}=0\,, \label{Eq:dbrCc} \\
   &\br{\tr YX^{im},\hat{t}^j_{r\alpha,s\beta}}=-(jm+r-s)\hat{t}^{i+j}_{r\alpha,s\beta}\,, \label{Eq:dbrCd} \\
   &\br{\hat{t}^i_{r\alpha,s\beta},\hat{t}^j_{r'\alpha',s'\beta'}}= 
   \delta_{r',s}\delta_{\beta,\alpha'}  \hat{t}^{i+j}_{r\alpha,s'\beta'} 
   -\delta_{r,s'}\delta_{\beta',\alpha}  \hat{t}^{i+j}_{r'\alpha',s\beta}\,. \label{Eq:dbrCe}
  \end{align}
 \end{subequations}
\end{lem}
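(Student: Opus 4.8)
The plan is to derive all five families of identities in Lemma~\ref{Lem:dbrCoord} from the fundamental double brackets \eqref{Eq:dbrSpace} (equivalently \eqref{Eq:dbrSpace} rewritten as $\dgal{X,Y}=E_1$, $\dgal{Y,X}=-E_{-1}$, $\dgal{V_{r,\alpha},W_{s,\beta}}=\delta_{rs}\delta_{\alpha\beta}\,\Id_{\VV_s}\otimes\Id_{\VV_\infty}$, all others zero) by repeated application of the iterated Leibniz rule \eqref{Eq:dbrProd}, the trace formula \eqref{Eq:Tr2}, and antisymmetry \eqref{Eq:dbr1}. The computations are structurally identical to the proof of Lemma~\ref{Lem:dbrYY}, the only new feature being the bookkeeping of the cyclic grading: since $X$ shifts the index $\VV_s\to\VV_{s-1}$ and $Y$ shifts $\VV_s\to\VV_{s+1}$, a monomial such as $W_{r,\alpha}V_{s,\beta}X^{jm+r-s}$ lands back in $\Hom(\VV_\infty,\VV_\infty)$ precisely because $X^{jm+r-s}$ takes $\VV_s$ to $\VV_{s-(jm+r-s)}=\VV_{r}$ modulo $m$ wait — one must track this carefully, but the exponent $jm+r-s$ is exactly what makes the trace well-defined, and the analogous statement holds for $W_{r,\alpha}V_{s,\beta}Y^{jm+\rho_{r,s}}$.

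Concretely, I would proceed identity by identity. For \eqref{Eq:dbrCa}: the first equality is immediate since $\dgal{X,X}=0$; for the second, expand $\br{\tr X^{im},\tr YX^{jm+1}}$ using \eqref{Eq:Tr2} and \eqref{Eq:dbrProd}, so that only the single term $\dgal{X,Y}=E_1$ contributes, and then collapse the resulting $(\tr\circ\rmm)$ of a sum over $\tau\in\{0,\ldots,im-1\}$ and $s\in I$ using the commutation identities $X\Id_{\VV_s}=\Id_{\VV_{s-1}}X$, $Y\Id_{\VV_s}=\Id_{\VV_{s+1}}Y$ exactly as in Lemma~\ref{Lem:dbrYY}; each of the $im$ values of $\tau$ yields $\tr X^{(i+j)m}$ with sign $+1$ (the sign coming from $\dgal{X,Y}$ rather than $\dgal{Y,X}$), giving $im\,\tr X^{(i+j)m}$. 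For \eqref{Eq:dbrCb}, one has contributions from both $\dgal{X,Y}$ (when differentiating the $X$-part of the first factor against the $Y$ of the second) and $\dgal{Y,X}$ (the reverse), and the two sums telescope to $(i-j)m\,\tr YX^{(i+j)m+1}$. Identity \eqref{Eq:dbrCc} is trivial since $X$ has zero double bracket with $X$, $W_{r,\alpha}$, $V_{s,\beta}$. For \eqref{Eq:dbrCd}, the only nonzero contribution comes from pairing the single $Y$ in $\tr YX^{im}$ against each of the $jm+r-s$ copies of $X$ in $\hat t^j_{r\alpha,s\beta}$ via $\dgal{Y,X}=-E_{-1}$, producing $-(jm+r-s)\hat t^{i+j}_{r\alpha,s\beta}$. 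Finally \eqref{Eq:dbrCe} is the one genuinely ``new'' computation: here the only nonzero elementary double bracket is $\dgal{V,W}$ (and its antisymmetric partner $\dgal{W,V}$), so differentiating $\hat t^i_{r\alpha,s\beta}=\tr W_{r,\alpha}V_{s,\beta}X^{im+r-s}$ against $\hat t^j_{r'\alpha',s'\beta'}$ gives two terms: one from $\dgal{V_{s,\beta},W_{r',\alpha'}}$, contributing the Kronecker factor $\delta_{r's}\delta_{\beta\alpha'}$ and, after reassembling the matrices and taking the trace, $\hat t^{i+j}_{r\alpha,s'\beta'}$; and one from $\dgal{W_{r,\alpha},V_{s',\beta'}}=-\dgal{V_{s',\beta'},W_{r,\alpha}}^\circ$, contributing $-\delta_{rs'}\delta_{\beta'\alpha}\hat t^{i+j}_{r'\alpha',s\beta}$.

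The main obstacle is purely organizational rather than conceptual: in \eqref{Eq:dbrCe} one must carefully verify that when the $\dgal{V,W}$ insertion splits the two monomials and the outer factors are rearranged under $\tr\circ\rmm$, the surviving exponents on $X$ combine to exactly $im+jm+r-s$ (resp. $im+jm+r'-s'$), so that the result is genuinely of the form $\hat t^{i+j}_{\bullet,\bullet}$ with the correct superscript; this relies on the identity $(im+r-s)+(jm+r'-s')=(i+j)m+r-s'$ being forced by the constraint $r'=s$ (resp.\ $r=s'$) imposed by the Kronecker deltas, together with the grading compatibility $X\Id_{\VV_s}=\Id_{\VV_{s-1}}X$. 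Once this grading bookkeeping is set up once and for all, each identity is a short, mechanical calculation of the same shape as the proof of Lemma~\ref{Lem:dbrYY}. I would present \eqref{Eq:dbrCa}--\eqref{Eq:dbrCd} briefly by analogy and write out \eqref{Eq:dbrCe} in slightly more detail.
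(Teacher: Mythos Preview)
Your proposal is correct and follows essentially the same approach as the paper: the paper dispatches \eqref{Eq:dbrCa}--\eqref{Eq:dbrCd} in a single sentence by analogy with Lemma~\ref{Lem:dbrYY} and then writes out \eqref{Eq:dbrCe} explicitly, exactly as you outline (expanding via \eqref{Eq:dbrProd}, keeping only the two nonzero contributions from $\dgal{W_{r,\alpha},V_{s',\beta'}}$ and $\dgal{V_{s,\beta},W_{r',\alpha'}}$, and reassembling under $\tr\circ\rmm$). Your additional remarks on the grading bookkeeping and on how the Kronecker constraints force the exponents to combine correctly are accurate and would make the written proof slightly more transparent than the paper's version.
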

\begin{proof}
 The first four identities can be computed in a way similar to the proof of Lemma \ref{Lem:dbrYY}. For \eqref{Eq:dbrCe}, we note using \eqref{Eq:dbrSpace} that 
 \begin{equation*}
  \begin{aligned}
&\br{\tr W_{r,\alpha}V_{s,\beta}X^{im+r-s}, \tr W_{r',\alpha'}V_{s',\beta'}X^{jm+r'-s'}} \\
=&+(\tr\circ \rmm)\,  (W_{r',\alpha'}\otimes 1) \dgal{W_{r,\alpha},V_{s',\beta'}} (V_{s,\beta}X^{im+r-s} \otimes X^{jm+r'-s'}) \\
&+(\tr\circ \rmm)\,  (1 \otimes W_{r,\alpha}) \dgal{V_{s,\beta},W_{r',\alpha'}} (X^{im+r-s} \otimes V_{s',\beta'} X^{jm+r'-s'}) \\
=&-\delta_{r,s'}\delta_{\alpha,\beta'} (\tr\circ \rmm)\,\, ( W_{r',\alpha'}\Id_{\VV_\infty}V_{s,\beta}X^{im+r-s} \otimes \Id_{\VV_r} X^{jm+r'-s'}) \\
&+\delta_{s,r'}\delta_{\beta,\alpha'} (\tr\circ \rmm)\,\,  (\Id_{\VV_s} X^{im+r-s} \otimes W_{r,\alpha}\Id_{\VV_\infty} V_{s',\beta'} X^{jm+r'-s'}) \\
=&-\delta_{r,s'}\delta_{\alpha,\beta'}\,\tr(W_{r',\alpha'} V_{s,\beta}X^{(i+j)m+r'-s}) \\
&+\delta_{s,r'}\delta_{\beta,\alpha'} \, \tr ( W_{r,\alpha}V_{s',\beta'} X^{(i+j)m+r-s'}) \,. 
  \end{aligned}
 \end{equation*}
This is precisely \eqref{Eq:dbrCe}. 
\end{proof}

\subsubsection{Computations for Section \ref{SS:Harm}}  

\begin{lem} 
Let $L=Y^2+\omega^2 X^2$ for some fixed $\omega \in \CC$. Then, 
  \begin{subequations}
   \begin{align}
\dgal{L,Y}=&\omega^2 [E_1(X \otimes 1) + (1 \otimes X)E_1]\,, \quad 
\dgal{L,X}=-[E_{-1}(Y \otimes 1) + (1 \otimes Y)E_{-1}]\,, \label{Eq:Lxy} \\
  \dgal{L,L}=&+\omega^2 [E_{2}(YX \otimes 1)- E_{-2} (XY\otimes 1)  + E_{2} (1\otimes XY) - E_{-2} (1\otimes YX)] \nonumber \\
  &+\omega^2 [E_3-E_{-1}](Y\otimes X) + \omega^2 [E_1-E_{-3}](X\otimes Y)\,. \label{Eq:Lz} 
   \end{align}
  \end{subequations}
If furthermore $m=1$ or $m=2$, and we set  $L_{\pm}:=Y \pm \ii \omega X$, we have   
  \begin{subequations} 
  \begin{align}
      \dgal{L,L}=&+\omega^2 E_{0} ( 1\otimes [X,Y] -  [X,Y]\otimes 1)\,, \label{Eq:LzB} \\
\dgal{L,L_+}=&-\ii \omega [E_1(L_+ \otimes 1) + (1 \otimes L_+)E_1]\,, \quad 
\dgal{L,L_-}=\ii \omega [E_1(L_- \otimes 1) + (1 \otimes L_-)E_1]\,.  \label{Eq:Lpm}
  \end{align}
  \end{subequations}
\end{lem}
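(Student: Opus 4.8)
The plan is to compute all the double brackets directly from the fundamental relations \eqref{Eq:dbrSpace}, i.e. $\dgal{X,Y}=E_1$ (equivalently $\dgal{Y,X}=-E_{-1}$) and the vanishing of every other double bracket among $X,Y$. First I would establish \eqref{Eq:Lxy}. Writing $L=Y^2+\omega^2X^2$ and using the Leibniz rule \eqref{Eq:dbr2} on each factor,
\begin{equation*}
\dgal{L,Y}=\dgal{Y^2,Y}+\omega^2\dgal{X^2,Y}=\omega^2\bigl[(1\otimes X)\dgal{X,Y}+\dgal{X,Y}(X\otimes 1)\bigr],
\end{equation*}
since $\dgal{Y,Y}=0$, and then one commutes the idempotent pieces through $X$ using $X\Id_{\VV_s}=\Id_{\VV_{s-1}}X$ to rewrite $(1\otimes X)E_1$ and $E_1(X\otimes 1)$ as stated. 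The identity for $\dgal{L,X}$ is entirely analogous, using $\dgal{X^2,X}=0$ and $\dgal{Y^2,X}=(1\otimes Y)\dgal{Y,X}+\dgal{Y,X}(Y\otimes 1)=-[(1\otimes Y)E_{-1}+E_{-1}(Y\otimes 1)]$.

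Next I would derive \eqref{Eq:Lz} for general $m$. Expanding $\dgal{L,L}=\dgal{L,Y^2}+\omega^2\dgal{L,X^2}$ via \eqref{Eq:dbr2} reduces it to $(1\otimes Y)\dgal{L,Y}+\dgal{L,Y}(Y\otimes 1)$ plus the $\omega^2$-analogue with $X$, and then one substitutes \eqref{Eq:Lxy}. This produces eight terms of the shape $E_{\pm1}(\,\cdot\otimes Y)$, $E_{\pm1}(Y\otimes\cdot)$, etc.; moving the $Y$'s and $X$'s through the idempotents shifts the index of each $E$ by $\pm1$, and after collecting one lands on the stated combination of $E_{\pm2}$, $E_{\pm1}$, $E_{\pm3}$ with coefficients $YX\otimes 1$, $XY\otimes 1$, $1\otimes XY$, $1\otimes YX$, $Y\otimes X$, $X\otimes Y$. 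The bookkeeping of which summand picks up which index shift is the step most prone to sign and index errors, so I would do it carefully term by term; antisymmetry \eqref{Eq:dbr1}, which forces $\dgal{L,L}=-\dgal{L,L}^\circ$, gives a useful consistency check on the final expression.

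Finally, the $m=1,2$ specializations \eqref{Eq:LzB} and \eqref{Eq:Lpm}. When $m\le 2$ one has $E_r=E_{r'}$ whenever $r\equiv r'\bmod m$, so in \eqref{Eq:Lz} all the $E_{\pm2}$ collapse to $E_0$ and all the $E_{\pm3},E_{\pm1}$ collapse to $E_1$. Substituting $E_{\pm2}\to E_0$ and $E_{\pm1},E_{\pm3}\to E_1$ and regrouping, the $Y\otimes X$ and $X\otimes Y$ terms combine into $E_1$ times $(Y\otimes X+X\otimes Y)$ minus themselves, which cancels, while the remaining four terms assemble into $\omega^2E_0(1\otimes(XY-YX)-(XY-YX)\otimes 1)=\omega^2E_0(1\otimes[X,Y]-[X,Y]\otimes 1)$, giving \eqref{Eq:LzB}. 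For \eqref{Eq:Lpm}, since $\dgal{L,L_\pm}=\dgal{L,Y}\pm\ii\omega\dgal{L,X}$, I substitute \eqref{Eq:Lxy} with the $E$'s reduced mod $m$ (so $E_{-1}=E_1$), obtaining $\omega^2[E_1(X\otimes1)+(1\otimes X)E_1]\mp\ii\omega[E_1(Y\otimes1)+(1\otimes Y)E_1]=\mp\ii\omega[E_1((Y\pm\ii\omega X)\otimes1)+(1\otimes(Y\pm\ii\omega X))E_1]$, which is the claim. The main obstacle throughout is not conceptual but the disciplined tracking of the $\Z/m\Z$-grading shifts encoded in the $E_r$ and the accompanying signs; everything else is a mechanical application of \eqref{Eq:dbr1}, \eqref{Eq:dbr2}, and the commutation of idempotents through $X$ and $Y$.
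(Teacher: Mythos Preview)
Your approach is essentially the same as the paper's: compute \eqref{Eq:Lxy} directly from the Leibniz rules and $\dgal{X,Y}=E_1$, then expand $\dgal{L,L}$ in the second argument, substitute \eqref{Eq:Lxy}, and push factors of $X,Y$ through the $E_r$ to shift indices; finally reduce the $E_r$ modulo $m$ for $m\in\{1,2\}$ to obtain \eqref{Eq:LzB} and \eqref{Eq:Lpm}. One small slip to fix: when you expand $\dgal{L,Y^2}$ you quote the Leibniz rule with the tensor factors swapped; the correct form from \eqref{Eq:dbr2} is $\dgal{L,Y^2}=(Y\otimes 1)\dgal{L,Y}+\dgal{L,Y}(1\otimes Y)$, exactly as the paper writes it, not $(1\otimes Y)\dgal{L,Y}+\dgal{L,Y}(Y\otimes 1)$.
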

\begin{proof}
  The equalities in \eqref{Eq:Lxy} are straightforward. Next, we can expand  
  \begin{equation*}
   \begin{aligned}
\dgal{L,L}=(Y\otimes 1)\dgal{L,Y} + \dgal{L,Y} (1\otimes Y) + \omega^2 (X\otimes 1)\dgal{L,X}+ \omega^2 \dgal{L,X}(1\otimes X)\,.   
   \end{aligned}
  \end{equation*}
Using \eqref{Eq:Lxy} and the identities 
\begin{equation*}
 (1\otimes X)E_r=E_{r+1}(1\otimes X),\,\,(X\otimes 1)E_r=E_{r-1}(X\otimes 1), \,\, (1\otimes Y)E_r=E_{r-1}(1\otimes Y),\,\, (Y\otimes 1)E_r=E_{r+1}(Y\otimes 1)\,, 
\end{equation*}
we get \eqref{Eq:Lz}. 

For $m=1$ or $m=2$, we can write \eqref{Eq:Lz} as \eqref{Eq:LzB}.  We can also easily derive \eqref{Eq:Lpm} from \eqref{Eq:Lxy}.
\end{proof}

We now compute some Poisson brackets between invariant functions. We restrict to the cases $m=1$ and $m=2$ as otherwise most of the functions are trivially zero. For example, since $L$ can be decomposed as linear maps $\VV_s \to \VV_{s\pm 2}$ for all $s\in I$, $\tr L$ can only be nonzero in those two cases.  

\begin{lem}  \label{Lem:dbrLL}
Denote by $\rho_{r,s}\in \{0,\ldots,m-1\}$ the representative of $r-s\in I$ for any $r,s\in I$. 
If $m=1$ or $m=2$, the following identities hold for any indices : 
 \begin{subequations}
  \begin{align}
   &\br{\tr L^{i},\tr L^{j}}=0\,; \label{Eq:LLa} \\
   &\br{\tr L,\tr W_{r,\alpha}V_{s,\beta}L_{\pm}^{jm+\rho_{r,s}}}=\mp 2\ii \omega (jm+\rho_{r,s})\tr (W_{r,\alpha}V_{s,\beta}L_{\pm}^{jm+\rho_{r,s}})  \label{Eq:LLb} \,.
  \end{align}
 \end{subequations}
\end{lem}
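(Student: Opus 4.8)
\textbf{Proof proposal for Lemma \ref{Lem:dbrLL}.}

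The plan is to reduce both identities to the double-bracket formulas established in the preceding lemma, namely \eqref{Eq:LzB} and \eqref{Eq:Lpm}, together with the trace identities \eqref{Eq:Tr1}--\eqref{Eq:Tr2} and the iterated Leibniz rule \eqref{Eq:dbrProd}. For \eqref{Eq:LLa}, I would compute $\br{\tr L^i,\tr L^j}$ as $(\tr\otimes\tr)(L^{i-1}\otimes L^{j-1})\dgal{L,L}$, insert \eqref{Eq:LzB}, and observe that since $m=1$ or $m=2$ the relevant $E_0$ reduces to $\sum_s \Id_{\VV_s}\otimes\Id_{\VV_s}$, which commutes with all powers of $L$ because $L$ maps $\VV_s\to\VV_s$ (for $m=1$) or $\VV_s\to\VV_{s+2}=\VV_s$ (for $m=2$). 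After pushing the $E_0$ through and using cyclicity of each trace factor, the two terms in \eqref{Eq:LzB} cancel: $(\tr\otimes\tr)[\,L^{i-1}\otimes L^{j-1}[X,Y]\,] = \tr(L^{i-1})\tr(L^{j-1}[X,Y])$ equals the matching contribution from the other term. This is essentially the same telescoping argument used in \S 4.1 to show $(\tr L^k)$ Poisson commute, now applied to the explicit $\dgal{L,L}$.

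For \eqref{Eq:LLb}, I would use \eqref{Eq:Tr1} to write $\br{\tr L,g}=\rmm\circ\dgal{L,g}$ where $g=\tr(W_{r,\alpha}V_{s,\beta}L_\pm^{N})$ with $N=jm+\rho_{r,s}$; more precisely, $\br{\tr L,\tr(W_{r,\alpha}V_{s,\beta}L_\pm^N)}=\tr(\rmm\circ\dgal{L,W_{r,\alpha}V_{s,\beta}L_\pm^N})$. By the Leibniz rule \eqref{Eq:dbr2}, expanding $\dgal{L,W_{r,\alpha}V_{s,\beta}L_\pm^N}$ produces terms from each factor. The double brackets $\dgal{L,W_{r,\alpha}}$ and $\dgal{L,V_{s,\beta}}$ vanish (since $L$ is built from $X,Y$ only, and these have zero bracket with $W,V$), so the only surviving contributions come from the $N$ copies of $L_\pm$, giving $\sum_{\tau=1}^{N}(W_{r,\alpha}V_{s,\beta}L_\pm^{\tau-1}\otimes 1)\dgal{L,L_\pm}(L_\pm^{N-\tau}\otimes 1)$. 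Substituting \eqref{Eq:Lpm}, e.g. $\dgal{L,L_+}=-\ii\omega[E_1(L_+\otimes 1)+(1\otimes L_+)E_1]$, applying $\rmm$ and then $\tr$, and using that (for $m=1,2$) $E_1$ becomes $\sum_s \Id_{\VV_{s+1}}\otimes\Id_{\VV_s}$ which under $\rmm$ acts as a sum of identities on the appropriate blocks, each of the $N$ summands collapses to $\mp\ii\omega$ times $\tr(W_{r,\alpha}V_{s,\beta}L_\pm^N)$ (the two pieces $E_1(L_\pm\otimes 1)$ and $(1\otimes L_\pm)E_1$ each contributing $\mp\ii\omega$, or combining to the factor $2$). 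Summing over $\tau$ gives the factor $N=jm+\rho_{r,s}$, yielding $\mp 2\ii\omega(jm+\rho_{r,s})\tr(W_{r,\alpha}V_{s,\beta}L_\pm^N)$.

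I expect the main obstacle to be bookkeeping the $E_r$ operators carefully through $\rmm$ and the trace: one must check that the block structure (the direct-sum decomposition $\VV=\bigoplus_s\VV_s\oplus\VV_\infty$) makes $\rmm\circ E_1$ act correctly on the product $W_{r,\alpha}V_{s,\beta}L_\pm^N$, and in particular that the shift in $E_1$ is compatible with $L_\pm$ mapping $\VV_s\to\VV_{s+1}$ (which holds precisely because $L_\pm=Y\pm\ii\omega X$ with $Y:\VV_s\to\VV_{s+1}$ and $X:\VV_s\to\VV_{s-1}$, so for $m\le 2$ these coincide up to the cyclic identification). Once the identifications $E_r=\sum_s\Id_{\VV_{s+r}}\otimes\Id_{\VV_s}$ and the commutation rules $(1\otimes L_\pm)E_r=E_{r+1}(1\otimes L_\pm)$ etc. are invoked, the computation is a routine telescoping. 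Both parts are therefore direct applications of \eqref{Eq:dbr1}, \eqref{Eq:dbr2}, \eqref{Eq:Tr1}, \eqref{Eq:Tr2}, \eqref{Eq:dbrProd}, and the explicit double brackets \eqref{Eq:LzB}--\eqref{Eq:Lpm}, in the same style as the proofs of Lemmas \ref{Lem:dbrYY} and \ref{Lem:dbrCoord}.
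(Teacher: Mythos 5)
Your overall strategy coincides with the paper's: insert the explicit double brackets \eqref{Eq:LzB} and \eqref{Eq:Lpm} into the Leibniz expansion and the trace identities, then use the block relations $\Id_{\VV_{s+1}}L_\pm=L_\pm\Id_{\VV_s}$ and $\Id_{\VV_s}L=L\Id_{\VV_s}$, which hold precisely for $m=1,2$. Your argument for \eqref{Eq:LLb} is essentially the paper's proof: the brackets of $L$ with $W_{r,\alpha}$ and $V_{s,\beta}$ vanish, each of the $N=jm+\rho_{r,s}$ positions contributes $\mp2\ii\omega$ times the original function through the two pieces of \eqref{Eq:Lpm}, and summing over positions produces the factor $\mp2\ii\omega N$. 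That part is correct.

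For \eqref{Eq:LLa}, however, there is a concrete gap. You contract via $(\tr\otimes\tr)\bigl((L^{i-1}\otimes L^{j-1})\dgal{L,L}\bigr)$ and compare \emph{products} of traces, asserting that $\tr(L^{i-1})\,\tr(L^{j-1}[X,Y])$ cancels against ``the matching contribution from the other term'', i.e.\ $\tr(L^{i-1}[X,Y])\,\tr(L^{j-1})$. That contraction rule belongs to the tensor notation $\br{-\overset{\otimes}{,}-}$ used to motivate double brackets in Section~\ref{S:Dbr}, not to the double bracket itself: by \eqref{Eq:Tr2} one has $\br{\tr L^i,\tr L^j}=\tr\bigl(\rmm\circ\dgal{L^i,L^j}\bigr)$, a \emph{single} trace of a product. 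The distinction is not cosmetic here, because the two products of traces above need not be equal: already for $i=1$ the second one vanishes (as $\tr[X,Y]=0$) while the first is $\dim(\VV_s)\,\tr(\Id_{\VV_s}L^{j-1}[X,Y])$, which is not identically zero once $j\geq 3$ (e.g.\ $\tr(L^{2}[X,Y])$ reduces to $\tr(XYX^2Y^2)-\tr(YXY^2X^2)$ up to a factor, and these words are not cyclically equivalent). So the cancellation fails as you have set it up. With the correct contraction, each pair of positions and each $s\in I$ contributes $-\omega^2\bigl[\tr(L^{j-1}\Id_{\VV_s}[X,Y]L^{i-1}\Id_{\VV_s})-\tr(L^{j-1}\Id_{\VV_s}L^{i-1}[X,Y]\Id_{\VV_s})\bigr]$, and this difference of single traces does vanish by cyclicity once $\Id_{\VV_s}$ is commuted through $L$. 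Your idea is the right one, but part (a) only closes if you run it through \eqref{Eq:Tr2} rather than $(\tr\otimes\tr)$.
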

\begin{proof}
a) Using \eqref{Eq:LzB}, we have that 
 \begin{equation*}
 \begin{aligned}
  \br{\tr L^{i},\tr L^{j}}=& (\tr\circ \rmm)\,  \sum_{\tau=0}^{i-1}\sum_{\sigma=0}^{j-1}
 (L^\sigma \otimes L^{\tau}) \dgal{L,L} (L^{i-\tau-1} \otimes L^{j-\sigma-1}) \\
=&-\omega^2\,  \sum_{\tau=0}^{im}\sum_{\sigma=0}^{j-1}\sum_{s\in I}
\left[ \tr(L^{j-1}\Id_{\VV_s}[X,Y]L^{i-1}\Id_{\VV_s}) - \tr(L^{j-1}\Id_{\VV_s}L^{i-1}[X,Y]\Id_{\VV_s})  \right]\,.
 \end{aligned}
 \end{equation*}
 But this vanishes since $\Id_{\VV_s}L=L\Id_{\VV_{s+2}}=L\Id_{\VV_s}$ for $m=1,2$. 
 
 b) We clearly have that $\dgal{L,W_{r,\alpha}}=0$ and $\dgal{L,W_{s,\beta}}=0$. Thus by \eqref{Eq:Lpm}, if $N:=jm+\rho_{r,s}$ we have 
  \begin{equation*}
 \begin{aligned}
  \br{\tr L,\tr W_{r,\alpha}V_{s,\beta}L_{\pm}^N}=& (\tr\circ \rmm)\,  \sum_{\sigma=0}^{N-1}
 (W_{r,\alpha}V_{s,\beta}L_{\pm}^\sigma \otimes 1) \dgal{L,L_{\pm}} (1 \otimes L_{\pm}^{N-\sigma-1}) \\
=&\mp\ii \omega\,  \sum_{\tau=0}^{im}\sum_{\sigma=0}^{N-1}\sum_{s\in I}
 \tr\left(W_{r,\alpha}V_{s,\beta}L_{\pm}^\sigma (\Id_{\VV_{s+1}}L_{\pm} \Id_{\VV_{s}})L_{\pm}^{N-\sigma-1} \right) \\
 =&\mp 2\ii \omega N \,\tr (W_{r,\alpha}V_{s,\beta}L_{\pm}^{N})\,,
 \end{aligned}
 \end{equation*}
 because for $m=1,2$ we have $\Id_{\VV_{s+1}}L_{\pm}=L_{\pm} \Id_{\VV_{s}}$. 
\end{proof}

\begin{lem} \label{Lem:dbrLLm1}
 If $m=1$, we have for any $l \in \N$, 
\begin{equation}
\br{\tr L ,\tr XL^l}= -2 \,\tr YL^{l} \,, \quad
 \br{\tr L,\tr YL^l}=\,2\omega^2\, \tr XL^{l} \,. \label{Eq:LLc}
\end{equation}
 If $m=1,2$, we have for any $l \in \N$, 
 \begin{subequations}
 \begin{align}
  \br{\tr L, \tr XY L^l}=&\br{\tr L, \tr YX L^l}= -2 \tr(Y^2 L^l)+2 \omega^2 \tr (X^2 L^l)\,, \label{Eq:LLd1}\\
  \br{\tr L, \tr Y^2 L^l}=&2\omega^2 \left(\tr (XY L^l) + \tr (YX L^l) \right)\,,\label{Eq:LLd2} \\
  \br{\tr L, \tr X^2 L^l}=&-2\left(\tr (XY L^l) + \tr (YX L^l) \right)\,. \label{Eq:LLd3}
 \end{align}
 \end{subequations}
\end{lem}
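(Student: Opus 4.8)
The plan is to reduce all six identities to the single pair of \emph{matrix-valued} Poisson brackets $\br{\tr L,X}=-2Y$ and $\br{\tr L,Y}=2\omega^2X$, and then propagate these through products by the ordinary Leibniz rule together with linearity of the trace. Recall from \eqref{Eq:Tr1} that $\br{\tr L,B}=\rmm\circ\dgal{L,B}$ for any matrix $B$, so it is enough to apply $\rmm$ to the double brackets already computed in the preceding lemma. For $m\in\{1,2\}$ I would apply $\rmm$ to \eqref{Eq:Lpm}: using that $L_\pm$ is block-off-diagonal together with the idempotent identities $\Id_{\VV_{s+1}}X\Id_{\VV_s}=X\Id_{\VV_s}$ and $\Id_{\VV_{s+2}}=\Id_{\VV_s}$ — valid precisely because $m\mid 2$ — one checks $\rmm\bigl(E_1(L_\pm\otimes 1)\bigr)=\rmm\bigl((1\otimes L_\pm)E_1\bigr)=L_\pm$, so that
\begin{equation*}
 \br{\tr L,L_+}=-2\ii\omega L_+\,,\qquad \br{\tr L,L_-}=+2\ii\omega L_-\,.
\end{equation*}
Solving $L_\pm=Y\pm\ii\omega X$ for $X$ and $Y$ gives the two matrix identities above (for $m=1$ they can also be read straight off \eqref{Eq:Lxy}, where every $E_r$ equals $1\otimes 1$ and $\rmm(A\otimes 1)=\rmm(1\otimes A)=A$). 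Feeding them into the Leibniz rule applied to $L=Y^2+\omega^2X^2$ yields $\br{\tr L,L}=2\omega^2(XY+YX)-2\omega^2(XY+YX)=0$, hence $\br{\tr L,L^l}=0$ for every $l\in\N$.

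With these facts in hand, each of \eqref{Eq:LLc}--\eqref{Eq:LLd3} follows by writing the second argument as a product of copies of $X$, $Y$ and $L$, pushing $\br{\tr L,-}$ through by $\br{\tr L,AB}=\br{\tr L,A}B+A\br{\tr L,B}$ (every term that differentiates an $L$-block drops out by the previous paragraph), and taking the trace via $\br{\tr L,\tr A}=\tr\br{\tr L,A}$. For instance $\br{\tr L,\tr XL^l}=\tr\bigl(\br{\tr L,X}L^l\bigr)=-2\tr(YL^l)$ and $\br{\tr L,\tr YL^l}=\tr\bigl(\br{\tr L,Y}L^l\bigr)=2\omega^2\tr(XL^l)$, which is \eqref{Eq:LLc}; likewise $\br{\tr L,\tr XYL^l}=\tr\bigl(\br{\tr L,X}\,YL^l+X\,\br{\tr L,Y}L^l\bigr)=-2\tr(Y^2L^l)+2\omega^2\tr(X^2L^l)$, and the computation for $\tr YXL^l$ comes out identical (this is the first equality in \eqref{Eq:LLd1}). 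The remaining two identities \eqref{Eq:LLd2}--\eqref{Eq:LLd3} are obtained in exactly the same way, using $\br{\tr L,\tr Y^2L^l}=\tr\bigl(\br{\tr L,Y}YL^l+Y\br{\tr L,Y}L^l\bigr)$ and the analogous expansion for $\tr X^2L^l$.

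The only genuinely delicate step is the first one: when an idempotent $\Id_{\VV_s}$ is multiplied into $X$, $Y$ or $L_\pm$ one must carefully keep track of which block component is selected, and the collapse $\Id_{\VV_{s+2}}=\Id_{\VV_s}$ is exactly what forces the restriction to $m\in\{1,2\}$ — consistent with the remark preceding the lemma that $\tr L$, and hence every function built from it, is trivially zero otherwise. Once the matrix-level brackets $\br{\tr L,X}=-2Y$, $\br{\tr L,Y}=2\omega^2X$ and $\br{\tr L,L}=0$ are secured, everything that follows is a mechanical application of the Leibniz rule and the cyclicity of the trace.
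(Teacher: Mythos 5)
Your proof is correct and follows essentially the same route as the paper's: both arguments rest on the vanishing of $\rmm\circ\dgal{L,L}$ (equivalently, $\br{\tr L,L}=0$ as a matrix) together with the elementary brackets $\br{\tr L,X}=-2Y$ and $\br{\tr L,Y}=2\omega^2 X$, after which the Leibniz rule and linearity of the trace do the rest. The only cosmetic difference is that you obtain the $m=2$ base cases by diagonalising the action of $\br{\tr L,-}$ via $L_\pm$ and then recombining, whereas the paper computes $\rmm\dgal{L,XY}$, $\rmm\dgal{L,X^2}$, $\rmm\dgal{L,Y^2}$ directly; this changes nothing of substance.
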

\begin{proof}
We first note that for any matrix $A$, 
\begin{equation}
 \begin{aligned}
  \br{\tr L,\tr AL^l}=&(\tr\circ \rmm)\, \dgal{L,A}(1\otimes L^{l}) 
+ (\tr\circ \rmm)\, \sum_{\sigma=0}^{l-1} (AL^\sigma \otimes 1)\dgal{L,L}(1\otimes L^{l-\sigma-1})\\
=&\tr \big((\rmm \dgal{L,A}) L^{l}\big) 
+ \sum_{\sigma=0}^{l-1} \tr \big(AL^\sigma (\rmm\dgal{L,L})  L^{l-\sigma-1}\big) \\
=&\tr \big((\rmm \dgal{L,A}) L^{l}\big)\,, \label{Eq:LLtrick}
 \end{aligned}
\end{equation}
since applying the multiplication map to $\dgal{L,L}$ is zero by \eqref{Eq:LzB}. If $m=1$, we have from \eqref{Eq:Lxy} 
\begin{equation*}
 \rmm\dgal{L,Y}=2\omega^2 \sum_{s\in I} \Id_{\VV_{s+1}}X  \Id_{\VV_{s}}=2\omega^2 X\,, \quad  
  \rmm\dgal{L,X}=-2 \sum_{s\in I} \Id_{\VV_{s-1}}Y  \Id_{\VV_{s}}=-2Y\,,
\end{equation*}
so that \eqref{Eq:LLc} holds as a consequence of \eqref{Eq:LLtrick}. If $m=1,2$, we have in the same way that 
\begin{equation*}
\begin{aligned}
  &\rmm\dgal{L,XY}=\rmm\dgal{L,YX}=2\omega^2 X-2 Y^2\,, \\
  &\rmm\dgal{L,X^2}=-2(XY+YX)\,,\quad \rmm\dgal{L,Y^2}=2\omega^2 (XY+YX)\,,
\end{aligned}
\end{equation*}
and \eqref{Eq:LLd1}--\eqref{Eq:LLd3} hold as a consequence of \eqref{Eq:LLtrick}.
\end{proof}


\section{Conclusion and outlook}  \label{S:end}

In this paper, we focused on establishing  superintegrability of  complex generalisations of the rational CM system associated with cyclic quivers. These various systems are allowed to admit different types of spin variables (internal degrees of freedom) or a harmonic oscillator potential term, which are completely determined by the underlying quivers. 

To continue the investigation reported in this paper, it seems natural to try to construct such generalisations for other systems in the Calogero-Ruijsenaars family of integrable $n$-particle systems. In fact, these generalisations for the trigonometric RS system are known: for the simplest quivers considered in Section \ref{S:CM}, the corresponding systems were constructed in \cite{CF}; for the general quivers considered in Section \ref{S:spinCM}, the systems can be found in \cite{CF2,Fai,F19}. Our next aim is to unveil generalisations of the trigonometric CM and rational RS systems associated with cyclic quivers, which we expect to be maximally superintegrable. While the quivers from Section \ref{S:CM} give the usual form of these systems (see \cite[\S4.5]{GR}), the quivers from Section \ref{S:spinCM} lead to new versions of these systems endowed with different types of spin variables. In particular, we will investigate if they can be connected with the spin systems studied in \cite{CY,FP06,FP15,KLOZ}.

\vspace{1cm}


\end{document}